\crefname{figure}{Fig.}{Fig.}
\newtheorem{theorem}{Theorem}
\newtheorem{lemma}{Lemma}[theorem]
\theoremstyle{definition}
\newtheorem{definition}{Definition}
\theoremstyle{plain}
\newtheorem{defcorollary}{Corollary}[definition]
\theoremstyle{definition}
\begin{document}
\title{Contextuality Test of the Nonclassicality of Variational Quantum Eigensolvers}
\author{William M. Kirby}
\affiliation{Department of Physics and Astronomy, Tufts University\\574 Boston Avenue, Medford, MA 02155}%
\author{Peter J. Love}
\email{peter.love@tufts.edu}
\affiliation{Department of Physics and Astronomy, Tufts University\\574 Boston Avenue, Medford, MA 02155}%
\date{\today}

\begin{abstract}
Contextuality is an indicator of non-classicality, and a resource for various quantum procedures. In this paper, we use contextuality to evaluate the variational quantum eigensolver (VQE), one of the most promising tools for near-term quantum simulation. We present an efficiently computable test to determine whether or not the objective function for a VQE procedure is contextual. We apply this test to evaluate the contextuality of experimental implementations of VQE, and determine that several, but not all, fail this test of quantumness.
\end{abstract}
\maketitle

\section{Introduction}
\label{intro}
Quantum computing hardware is entering the era of noisy intermediate scale quantum (NISQ) computers~\cite{preskill2018quantum}. These are machines that are too large to simulate with classical computers, but too small to allow fault tolerant quantum computation. A crucial question is whether NISQ machines can perform useful tasks beyond the capabilities of classical computers~\cite{NAP25196}.

In the last decade much attention has been focused on algorithms for quantum simulation of chemical systems~\cite{aspuru-guzik05a,whitfield11a,kassal11a,jones12a,yung14a,aspuru-guzik18a,DuH2NMR,lanyon2010towards,peruzzo2014variational,wang2015quantum,omalley16a,santagati18a,shen2017quantum,paesani2017,kandala2017hardware,PhysRevX.8.031022,colless18a,nam19a,kandala19a}.
One such algorithm, the variational quantum eigensolver (VQE, first proposed in \cite{peruzzo2014variational}), has emerged as an important potential application of NISQ computers.
Experimental realizations of VQE have been performed on a number of platforms~\cite{DuH2NMR,lanyon2010towards,peruzzo2014variational,wang2015quantum,omalley16a,santagati18a,shen2017quantum,paesani2017,kandala2017hardware,PhysRevX.8.031022,dumitrescu18a,colless18a,nam19a,kokail19a,kandala19a}.

VQE is based on mapping a Hamiltonian $H$ to a weighted sum $\sum_ih_i\mathcal{P}_i$, where the terms $\mathcal{S}\equiv\{\mathcal{P}_i\}$ are Pauli operators and the $h_i$ are (real) coefficients. A short quantum circuit prepares an ansatz state, and the expectation value of each Hamiltonian term is estimated by repeated prepare-and-measure experiments. The ansatz parameters are optimized classically, producing a variational upper bound to the ground state energy.

VQE is advantageous for NISQ computers because of the short coherence times required compared to phase estimation~\cite{omalley16a}.
Theoretical improvements of VQE to date have proposed methods to reduce the number of qubits and measurements required \cite{love12,hastings15a,tranter15a,low19a,babbush16a,bravyi17a,setia18a,poulin18a,low18a,babbush18c,motta18a,steudtner18a,berry19a}, and to improve the ansatz states \cite{poulin18a,berry18a,tubman18a}, computation of gradients \cite{schuld18a,bergholm18a,napp19a}, and classical optimization techniques \cite{yang17a}. In the present paper we consider a separate issue: how quantum mechanical is this hybrid quantum-classical algorithm, for a given Hamiltonian? We use contextuality as our measure of quantumness.

The study of contextuality began with the Bell-Kochen-Specker theorem \cite{bell64a,bell66a,kochen67a}.
Contextuality of preparation, transformation and measurement were defined in 2008, and the relationship of contextuality to negativity of quasi-probability representations was established~\cite{spekkens08,ferrie08a,ferrie09a,ferrie11a,veitch12a}.  Contextuality has been extensively studied in the last decade~\cite{abramsky11a,ramanathan12a,raussendorf13a,howard14a,cabello14a,cabello15a,ramanathan14a,grudka14a,raussendorf17a,abramsky17a,de_silva17a,amaral17a,horodecki18a,karanjai18a,cabello18a,raussendorf18a,schmid18a,duarte18a,mansfield18a,okay18a,frembs18a,arvidsson-shukur19a,kirchmair09a,leupold18a,raussendorf19b}.

The Bell-Kochen-Specker theorem states that there exist quantum systems for which it is impossible to reproduce the outcome probabilities of every possible measurement as marginals of single joint probability distribution \cite{bell64a,bell66a,kochen67a}.
However, if we restrict to some smaller set of measurements corresponding to a set of observables $\mathcal{S}$, properties of the set determine whether a joint distribution may exist for only those measurements.
Measurement contextuality refers to various types of contradictions that can appear in attempts to describe sets of measurements by joint probability distributions.
We examine ``strong contextuality"~\footnote{so called in \cite{abramsky11a} and studied in ~\cite{abramsky11a,ramanathan12a,abramsky17a,amaral17a,horodecki18a,de_silva17a,karanjai18a,cabello18a,raussendorf18a,schmid18a,duarte18a,raussendorf19b}}, which is contextuality in the same vein as the Peres-Mermin square \cite{peres91a,mermin90a,mermin93a} (see Mermin's outline of a ``plausible" hidden-variable theory in \cite[\S II]{mermin93a}.)
Colloquially, a set of measurements is strongly contextual if it is impossible to consistently assign outcomes to every measurement in the set.
In ``weak" versions of contextuality such as Bell inequality violations, joint outcomes may be consistently assignable, but statistical predictions based on the existence of joint probability distributions are violated.

Since VQE is an important near-term application of NISQ machines, it is natural to consider how the contextuality of VQE procedures is related to any quantum advantage that they may obtain. In this paper, we present a method to analyze the contextuality of VQE procedures. As applied to VQE, strong contextuality is a property of the target Hamiltonian. It is independent of the ansatz states, and provides a stringent test of the quantumness of the problem being addressed. The set of Hamiltonians that are noncontextual by our definition includes diagonal Hamiltonians that encode a classical objective function. Such problems are addressed by the Quantum Approximate Optimization Algorithm (QAOA), which is closely related to VQE~\cite{farhi2014quantum}. As we shall see, the set of noncontextual Hamiltonians contains the set of commuting Pauli Hamiltonians, and therefore represents a broader definition of classicality.

One concept upon which we rely is the \emph{closed subtheory}: a set of measurements in which all measurements whose outcomes are determined with certainty by the outcomes of others in the set are themselves members of the set. We introduce this concept here because it provides a distinction between this work and the criteria for strong contextuality studied in~\cite{ramanathan12a}, which are based on sets of observables that are not necessarily closed subtheories. In~\cite{karanjai18a} it is shown that the efficiency of classical simulation is limited by contextuality for sets of measurements that are closed subtheories. We impose the requirement that sets of operators form closed subtheories, so that the results of~\cite{karanjai18a} apply to our setting.

In~\cite{cabello14a} the authors obtain criteria for contextuality based on compatibility graphs, as do we. However, \cite{cabello14a} focuses on weak contextuality, that is, violation of noncontextual inequalities, whereas our interest is in strong contextuality. We further discuss the distinction between our condition for contextuality and previously studied criteria in \cref{discussion}, and in \cref{relations}.

A natural next step is to develop measures that quantify contextuality based on our criterion. We suggest two simple measures at the end of \cref{contextuality}, and discuss more general measures in \cref{measures}, as well as their relations with prior measures, which include the contextual fraction \cite{abramsky11a,abramsky17a,mansfield18a,duarte18a}, relative entropy of contextuality, mutual information of contextuality, contextual cost (all in \cite{grudka14a}), and rank of contextuality \cite{horodecki18a}.

In \cref{contextuality}, we develop the notion of contextuality we will study and give our main results. In \cref{vqecontextualitytodate} we evaluate the contextuality of several VQE experiments. We conclude in \cref{discussion} with a discussion of our results, and directions for future work. 

\section{Strong contextuality}
\label{contextuality}

We focus on the analysis of strong contextuality for sets of Pauli operators.
We use the following notation:
$X\equiv\sigma_x$, $Y\equiv\sigma_y$, $Z\equiv\sigma_z$, and $I\equiv2\times2$ identity ($\mathds{1}$ will denote a generic identity matrix).
We omit the tensor product symbol: $IX$ denotes $I\otimes\sigma_x$. 
Let $\mathcal{S}$ be the set of measurements that are performed in a VQE procedure: in our case these will be Pauli measurements.
As we will discuss below, the (non)contextuality of a VQE procedure is determined by properties of $\mathcal{S}$.

A \emph{joint outcome assignment} is an assignment of one outcome ($\pm1$) to each measurement in $\mathcal{S}$. In an ontological hidden-variable theory, joint outcome assignments correspond to \emph{ontic states} (``real states") of a system, since they may be interpreted as definite ontological values for the observables $\mathcal{S}$.
A measurement is then seen as revealing information about the ontic state, which exists independently whether it is measured or not.

A \emph{context} on a finite dimensional Hilbert space is a set of pairwise-commuting observables whose eigenvalues uniquely specify the (shared) basis states.
If $\mathcal{S}$ is a context, we will see that it is always possible to consistently assign outcomes to the measurements in $\mathcal{S}$.
However, if $\mathcal{S}$ is not a context and has nonempty intersection with multiple incompatible contexts (context compatibility is defined in \cref{measurementcontextuality}), it may be impossible to consistently assign joint outcomes.
In this case the outcomes thus assigned to any individual measurement are context-dependent: hence the term ``contextual."

Given any set of measurements $\mathcal{S}$, let $\overline{\mathcal{S}}$ be the set of measurements whose outcomes are predicted with certainty given an assignment of outcomes to $\mathcal{S}$. In the language of \cite{karanjai18a}, $\overline{\mathcal{S}}$ corresponds to the smallest closed subtheory containing $\mathcal{S}$. The outcomes for $\overline{\mathcal{S}}$ induced by an assignment of outcomes to $\mathcal{S}$ may contain contradictions even if the outcomes for $\mathcal{S}$ alone do not.

A prediction with certainty occurs when for some observable $A'$ there exists a commuting subset $\mathcal{S}'\subseteq\mathcal{S}$ such that $A'$ is equal to the product of the operators $\mathcal{S}'$. Then since the operators $\mathcal{S}'$ may all be measured simultaneously, in any joint outcome assignment to $\mathcal{S}\cup\{A'\}$ the outcome assigned to $A'$ must be the product of the outcomes assigned to $\mathcal{S}'$: we therefore say that $A'$ is directly determined by $\mathcal{S}$~\cite{peres91a}. $A'$ may now contribute to determining some other operators that are not directly determined by $\mathcal{S}$.
Thus in general a measurement $A$ is determined by $\mathcal{S}$ if there is a ``determining tree" that leads from $\mathcal{S}$ to $A$:
\begin{definition}
    \label{dettreedef}
    A \emph{determining tree} for a Pauli measurement $A$ over a set of Pauli measurements $\mathcal{S}$ is a tree whose nodes are Pauli operators and whose leaves are operators in $\mathcal{S}$, such that...
    \begin{enumerate}
        \item The root is $A$.
        \item All children of any particular parent pairwise commute (as operators).
        \item Every parent node is the operator product of its children (and thus commutes with them).
    \end{enumerate}
\end{definition}

\cref{determiningtreeexample} shows determining trees for the measurements $\pm YY$ over $\mathcal{S}=\{XI,IX,ZI,IZ\}$.
It is easy to check that these trees satisfy the properties of \cref{dettreedef}.
This example is a recasting of the classic Peres-Mermin square \cite{peres91a,mermin90a,mermin93a}.

\begin{figure}[ht]
	\caption{Determining trees for $\pm YY$ over $\{XI,IX,ZI,IZ\}$\label{determiningtreeexample}}
	\centering
	\includegraphics[width=1.2in]{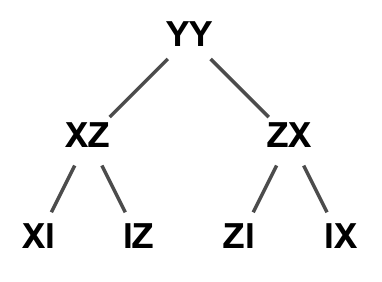}
	\includegraphics[width=1.2in]{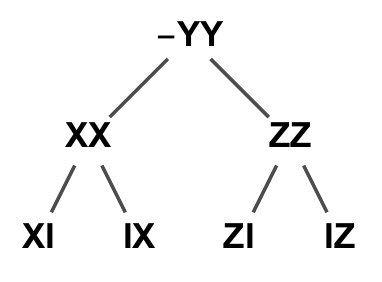}
\end{figure}

Given \cref{dettreedef}, we say that $A$ is \emph{determined} by $\mathcal{S}$ if and only if there exists a determining tree for $A$ over $\mathcal{S}$.
This also provides a formal definition for $\overline{\mathcal{S}}$: it is the set of Pauli measurements for which there exist determining trees over $\mathcal{S}$.

Given a determining tree $\tau$ for a Pauli $A$ over a set of Pauli operators $\mathcal{S}$, and a joint outcome assignment to $\mathcal{S}$, we may now find the determined outcome for $A$. Let $\mathcal{L}$ be the leaves of $\tau$; $\mathcal{L}$ may contain multiple copies of the same operator. By induction on property 3 of a determining tree (see \cref{dettreedef}), $A$ is the operator product of the elements of $\mathcal{L}$.
Therefore, given an assignment of values $\rho_L=\pm1$ to each $L\in\mathcal{L}$, the value assigned to $A$ must be
\begin{equation}
    \label{determiningproduct}
    \rho_A=\prod_{L\in\mathcal{L}}\left(\rho_L\right)^{m_L}=\prod_{L\in\mathcal{D}}\rho_L,
\end{equation}
where the exponent $m_L$ is the multiplicity of the operator $L$ in $\mathcal{L}$.
$\mathcal{D}$ is a subset of the leaves that we call the {\em determining set} of $\tau$, defined as follows:
\begin{definition}
    \label{detsetdef}
    For a determining tree $\tau$, the \emph{determining set} is defined to be the set containing one copy of each operator with odd multiplicity as a leaf in $\tau$.
    If for some determining tree with root $A$, the determining set is empty, then every $m_L$ in the first product in \eqref{determiningproduct} must be even, so the outcome assigned to $A$ is 1.
\end{definition}

We may now state our condition for contextuality:
\begin{definition}
    \label{contextualitydef}
    A set $\mathcal{S}$ of Pauli operators is \emph{contextual} if for some Pauli $A$ there exists a determining tree $\tau$ for $A$ over $\mathcal{S}$ and a determining tree $\tau'$ for $-A$ over $\mathcal{S}$ such that the determining sets for $\tau$ and $\tau'$ are identical.
\end{definition}
By \eqref{determiningproduct}, the existence of such trees implies that for any joint outcome assignment, the outcome for $A$ is both $+1$ and $-1$, which is a contradiction.

How does this apply to the Peres-Mermin square? \cref{determiningtreeexample} gives determining trees for $\pm YY$ over $\mathcal{S}=\{XI,IX,ZI,IZ\}$.
In each tree, the set of leaves is $\mathcal{S}$ and each leaf has multiplicity 1, so the determining set for each tree is $\mathcal{S}$.
Thus $\mathcal{S}$ satisfies the criteria in \cref{contextualitydef}, and is contextual.

The criterion for strong contextuality in \cref{contextualitydef} depends on a measurement operator ($A\in\overline{\mathcal{S}}$) that may or may not be an element of $\mathcal{S}$.
However, for any $\mathcal{S}$ that is contextual according to \cref{contextualitydef}, we may obtain a contradiction in the assignment(s) to an operator contained in $\mathcal{S}$.
This is demonstrated by the following corollary:
\begin{defcorollary}
    \label{contextualitydef2}
    A set $\mathcal{S}$ of Pauli operators is contextual if and only if for some $B\in\mathcal{S}$ there exists a determining tree for $-B$ over $\mathcal{S}$, whose determining set is $\{B\}$.
\end{defcorollary}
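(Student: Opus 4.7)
My plan is to prove both implications of the biconditional. The $(\Leftarrow)$ direction follows immediately from \cref{contextualitydef}: if a determining tree $\tau'$ for $-B$ over $\mathcal{S}$ has determining set $\{B\}$, then I would take $A = B$ and let $\tau$ be the trivial single-node tree whose single leaf is $B \in \mathcal{S}$, whose determining set is also $\{B\}$; the pair $(\tau,\tau')$ then witnesses contextuality of $\mathcal{S}$.

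For the $(\Rightarrow)$ direction, I would start from witnesses $A$, $\tau$, $\tau'$ with a common determining set $\mathcal{D}$, and build the desired tree in two stages. First, I would merge $\tau$ and $\tau'$ into a tree $\sigma$ by attaching them as the two subtrees of a new root $-I$ whose children are $A$ and $-A$. These children commute and multiply to $-A^2 = -I$, so \cref{dettreedef} is satisfied; the leaf multiset of $\sigma$ is the multiset-sum of those of $\tau$ and $\tau'$. Since every operator in $\mathcal{D}$ has odd multiplicity in each of $\tau$ and $\tau'$, it has even total multiplicity in $\sigma$, and every operator outside $\mathcal{D}$ already has even multiplicity in each tree; hence the determining set of $\sigma$ is empty.

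Next, I would choose some $B \in \mathcal{S}$ that is a leaf of $\sigma$---any $B \in \mathcal{D}$ works via the inclusion $\mathcal{D} \subseteq \mathcal{S}$; in the degenerate case $\mathcal{D} = \emptyset$, neither $\tau$ nor $\tau'$ can be a single-node tree (a single-node tree has nonempty determining set), so both have leaves and any one of them will do. I would then form the final tree $\sigma'$ with root $-B$ and two children: $-I$ expanded via $\sigma$, and $B$ attached as a fresh leaf. These children commute since $-I$ is scalar, and they multiply to $-B$, so \cref{dettreedef} is satisfied. The leaf multiset of $\sigma'$ equals that of $\sigma$ together with one additional copy of $B$, which flips $B$'s parity from even to odd and leaves every other multiplicity even; hence the determining set of $\sigma'$ is exactly $\{B\}$, as required.

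The main obstacle is finding the construction itself: once one sees that the multiset-sum of two leaf multisets with matching odd-multiplicity sets automatically cancels all odd multiplicities, and that scalar-valued operators such as $-I$ and $-B$ are permitted as parents in \cref{dettreedef} (they commute with everything and with themselves), the rest of the verification is routine parity bookkeeping. The $\mathcal{D} = \emptyset$ edge case is the only minor wrinkle, handled in one line by the observation that a single-node determining tree always has a singleton determining set.
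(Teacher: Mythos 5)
Your proof is correct, but the ``only if'' direction takes a genuinely different route from the paper. The paper works entirely inside $\tau$: it first converts $\tau$ to binary form (\cref{binarydettree}), picks $B\in\mathcal{D}$, and then walks down the path from the root $A$ to the leaf $B$, at each step grafting on $\tau'$ (at the top) and the sibling subtrees, to iteratively build trees for $-C_1,-C_2,\dots,-C_d=-B$; parity bookkeeping on the leaves of $\tau'$ and the sibling subtrees then gives determining set $\{B\}$. You instead first merge $\tau$ and $\tau'$ under a root $-\mathds{1}$ with children $A$ and $-A$ (which is exactly the paper's later construction for \cref{contextualitydef3}), observe that the merged tree has empty determining set, and then multiply by a fresh leaf $B$ under a new root $-B$. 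Your construction is shorter and self-contained: it needs neither the binary-form lemma nor the inductive path descent, only the observation that $-\mathds{1}$ is an admissible internal node (which the paper's own constructions for \cref{contextualitydef3} and \cref{4formsthm} already use) and the parity argument you give. What the paper's construction buys in exchange is a tree built purely from grafting subtrees of $\tau$ and $\tau'$, i.e.\ it exhibits the contradiction ``inside'' the original witnesses without passing through $-\mathds{1}$, and it directly produces the worked Peres--Mermin example in \cref{contextualitydef2example}. Your handling of the $\mathcal{D}=\emptyset$ case is fine but slightly overcomplicated: every determining tree has at least one leaf (a single-node tree's root is itself a leaf in $\mathcal{S}$), and in fact any $B\in\mathcal{S}$ works since a non-leaf has multiplicity zero, which is even; the single-node remark is unnecessary. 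The ``if'' direction is identical to the paper's.
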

The plain language statement of the contradiction in this case is: ``the outcome ($\pm1$) assigned to $-B$ must be the outcome assigned to $B$."
A third equivalent definition is also useful:
\begin{defcorollary}
    \label{contextualitydef3}
    A set $\mathcal{S}$ of Pauli operators is contextual if and only if there exists a determining tree for $-\mathds{1}$ over $\mathcal{S}$, whose determining set is empty.
\end{defcorollary}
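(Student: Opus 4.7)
The plan is to prove both directions constructively by grafting determining trees together. The key arithmetic is that for any signed Pauli $A$ we have $A\cdot(-A)=-A^2=-\mathds{1}$, that $A$ and $-A$ commute (because $-\mathds{1}$ is central in the signed Pauli group), and that $-\mathds{1}$ commutes with every Pauli and satisfies $(-\mathds{1})\cdot B=-B$. These identities are exactly what is needed to introduce new internal nodes that satisfy conditions 2 and 3 of \cref{dettreedef}.

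For the forward direction, I would start from \cref{contextualitydef}: assume $\mathcal{S}$ is contextual, witnessed by trees $\tau$ for $A$ and $\tau'$ for $-A$ with a common determining set $\mathcal{D}$. I would then build a new tree $\tau''$ with root $-\mathds{1}$ whose two children are the subtrees $\tau$ (rooted at $A$) and $\tau'$ (rooted at $-A$). The leaves of $\tau''$ form the disjoint multiset union of those of $\tau$ and $\tau'$: each $L\in\mathcal{D}$ has odd multiplicity in each of $\tau$ and $\tau'$, so even total multiplicity in $\tau''$, and each $L\notin\mathcal{D}$ already has even multiplicity in each. Every leaf of $\tau''$ therefore has even multiplicity, so by \cref{detsetdef} its determining set is empty.

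For the converse, I would start from a determining tree $\tau_0$ for $-\mathds{1}$ with empty determining set. Such a $\tau_0$ must contain at least one proper leaf, since otherwise the root would itself be a leaf giving determining set $\{-\mathds{1}\}\neq\emptyset$; in particular $\mathcal{S}$ is nonempty. I pick any $B\in\mathcal{S}$ and construct a tree $\tau'$ with root $-B$, children $-\mathds{1}$ and $B$, where the $-\mathds{1}$ child expands as $\tau_0$ and the $B$ child is a leaf. All leaves of $\tau_0$ have even multiplicity, so the only leaf of $\tau'$ with odd multiplicity is $B$, giving determining set exactly $\{B\}$. Pairing $\tau'$ with the trivial one-node tree for $B$ (also with determining set $\{B\}$) witnesses contextuality via \cref{contextualitydef} with $A=B$; incidentally this construction also establishes \cref{contextualitydef2}.

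There is no deep obstacle here. The content is bookkeeping of leaf multiplicities together with checking that the two newly introduced internal nodes satisfy conditions 2 and 3 of \cref{dettreedef}, which follows immediately from the identities in the first paragraph. The one place that merits a brief case check is degenerate inputs, e.g.\ $A=\mathds{1}$ in the forward direction or $B=-\mathds{1}$ in the converse, but the same product and commutation identities go through uniformly, so the construction is unaffected.
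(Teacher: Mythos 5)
Your proof is correct, and the forward direction (contextual $\Rightarrow$ tree for $-\mathds{1}$ with empty determining set) is exactly the paper's: root $-\mathds{1}$ with children $A$ and $-A$ carrying $\tau$ and $\tau'$, plus the parity bookkeeping on leaf multiplicities. The converse, however, takes a genuinely different route. The paper first invokes the binary-tree lemma (\cref{binarydettree}), observes that the two children of the root $-\mathds{1}$ must then be $P$ and $-P$ for some Pauli $P$, and shows the two subtrees have identical determining sets, thereby reconstructing a witness pair in the sense of \cref{contextualitydef} directly. You instead graft: wrap the given tree $\tau_0$ for $-\mathds{1}$ under a new root $-B$ whose other child is a fresh leaf $B\in\mathcal{S}$, so that $B$ is the unique odd-multiplicity leaf; pairing this with the one-node tree for $B$ gives the \cref{contextualitydef} witness with $A=B$. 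Your version avoids the binary reduction and the structural claim about the root's children, and it proves the stronger statement that the $-\mathds{1}$ criterion yields a \cref{contextualitydef2}-type contradiction for \emph{every} $B\in\mathcal{S}$, not just for elements of a shared determining set — so combined with the easy ``if'' of \cref{contextualitydef2} it also bypasses the paper's more involved path construction ($T_1,\dots,T_d$) used to prove that corollary. What the paper's decomposition buys in exchange is the symmetric correspondence between the two subtrees' determining sets, mirroring the forward construction; but as a proof of this corollary your grafting argument is complete (including the degenerate checks that $\mathcal{S}$ is nonempty and that $-\mathds{1}$ is central, so conditions 2 and 3 of \cref{dettreedef} hold at the new nodes) and arguably leaner.
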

\noindent The proofs may be found in \cref{proofs}.
The plain language statement of the contradiction in this case is: ``the outcome assigned to $-\mathds{1}$ (whose eigenvalues are all $-1$) must be $+1$."
\cref{contextualitydef}, \cref{contextualitydef2}, and \cref{contextualitydef3} formalize the notion of contradiction in induced joint outcomes for $\overline{\mathcal{S}}$. Since $\overline{\mathcal{S}}$ is the smallest closed subtheory containing $\mathcal{S}$, such a contradiction constitutes strong contextuality of $\mathcal{S}$.

We now present three theorems that give necessary and sufficient conditions for measurement contextuality in the sense of \cref{contextualitydef}.
We will make use of the following concept:
\begin{definition}
	For a set $\mathcal{S}$ of Pauli operators, the \emph{compatibility graph} of $\mathcal{S}$ is an undirected graph whose nodes are the operators in $\mathcal{S}$, and in which a pair of operators is adjacent if and only if they commute.
\end{definition}

\begin{theorem}
	\label{4formsthm}
	A set of four Pauli operators $\{A,B,C,D\}$ is contextual if and only if its compatibility graph has one of the forms given in \cref{4forms} (up to permutations of the operators).
\end{theorem}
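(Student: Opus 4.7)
The plan is to prove both implications.

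For the forward direction, I would assume $\{A,B,C,D\}$ is contextual and invoke \cref{contextualitydef3} to obtain a determining tree $\tau$ for $-\mathds{1}$ over $\{A,B,C,D\}$ with empty determining set. By \cref{detsetdef}, each of the four operators then appears an even number of times among the leaves of $\tau$, and by \eqref{determiningproduct}, the signed product of the leaves (computed in the order imposed by the tree) equals $-\mathds{1}$. Since $P^2=\mathds{1}$ for every Pauli $P$, I would reduce the leaf multiset and work with a canonical tree in which each internal node is the product of a pairwise-commuting pair (or triple) of operators drawn from $\{A,B,C,D\}$; this reduction uses the fact that the commutation structure of the Pauli group is determined entirely by pairwise commutation relations.

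From any such normalized tree, I would extract the commutation conditions forced by property 2 of \cref{dettreedef} at each internal node and rewrite them as edges of the compatibility graph. Because we have only four operators and the root equals $-\mathds{1}$, the possible shapes of $\tau$ are severely limited: the root must have children whose product is $-\mathds{1}$, and those children must themselves decompose into commuting subsets of the four operators. This reduces the problem to a finite case check that identifies the allowed compatibility graphs as precisely the forms shown in \cref{4forms}.

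For the backward direction, for each graph form depicted in \cref{4forms} I would explicitly construct a pair of determining trees in the style of \cref{determiningtreeexample}: pair up the commuting operators in the two different ways permitted by the graph and show that the resulting products differ by a sign, yielding determining trees for $A$ and $-A$ (for some Pauli $A$) with identical determining sets. Contextuality then follows from \cref{contextualitydef}. The main obstacle is the normalization step in the forward direction: determining trees may be arbitrarily deep, and arguing that any contextuality witness on four operators can be reduced to a bounded canonical form, without introducing commutation requirements beyond those already encoded in the compatibility graph, is the delicate part. Once this reduction is in hand, the remaining case analysis over the few possible tree shapes is routine.
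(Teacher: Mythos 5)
Your plan superficially parallels the paper's proof (explicit constructions for ``if'', analysis of a determining tree for $-\mathds{1}$ for ``only if''), but both halves have problems. In the ``if'' direction, your concrete recipe --- pair the four operators into commuting pairs in the two ways permitted by the graph and compare the signs of the products --- only realizes the third form in \cref{4forms}, the $4$-cycle, where $(AB)(CD)=-(AC)(BD)$ indeed follows from the commutation relations alone. The other two forms admit no partition of $\{A,B,C,D\}$ into two commuting pairs: in the first form only $\{A,B\}$ and $\{A,C\}$ commute and $D$ commutes with nothing, and in the second form the unique commuting-pair partition $\{A,C\},\{B,D\}$ yields two products that anticommute, so they cannot even be siblings in a tree. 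For those forms the witnessing trees must be deeper, sequential products (e.g., a tree for $(AB)D$ against one of the shape $(((AC)D)B)C$), which is exactly what the explicit constructions in \cref{proofs} supply; as stated, your construction covers one of the three cases.

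The genuine gap is in the ``only if'' direction, and you flag it yourself without closing it: the reduction of an arbitrary determining tree for $-\mathds{1}$ with empty determining set to a bounded canonical form. Nothing bounds the leaf multiset a priori, and the step ``since $P^{2}=\mathds{1}$ we may reduce the leaf multiset'' is not available as stated: two copies of a leaf that are not siblings cannot simply be cancelled, because commuting them together can introduce signs and the sibling-commutation requirement (property 2 of \cref{dettreedef}) need not survive the surgery; likewise internal nodes are in general products of products, not of the original four operators. The paper never performs a case analysis over tree shapes. Instead, the ``only if'' of \cref{4formsthm} is deferred to the proof of \cref{contextualcondition}, which argues by contraposition using two lemmas in \cref{proofs}: leaves that commute with all other leaves can be migrated to the root and cancelled (\cref{lemma1}), and if the compatibility graph is a disjoint union of cliques then every determining tree with empty determining set has root $+\mathds{1}$ (\cref{lemma2}). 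Hence a contextual set must contain a non-transitive commuting triple $A,B,C$ together with some $D$ anticommuting with $A$, which forces one of the graphs of \cref{4forms}. Without either that lemma-based argument or an actual proof of your normalization step, the forward direction is not established.
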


\begin{figure}[h!]
	\caption{Compatibility graphs for contextual sets of four Pauli operators.\label{4forms}}
	\centering
	\includegraphics[width=2.2in]{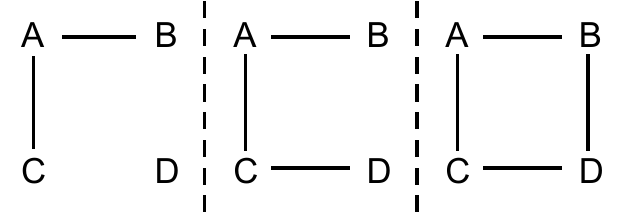}
\end{figure}

\begin{theorem}
	\label{contextualcondition}
	A set of $n$ Pauli operators is contextual if and only if it contains a subset consisting of four operators whose compatibility graph has one of the forms given in \cref{4forms} (up to permutations of the operators).
\end{theorem}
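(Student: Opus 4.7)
\emph{Plan.} The ``if'' direction is immediate. If $\mathcal{S}$ contains four operators $\mathcal{S}' = \{A, B, C, D\}$ whose compatibility graph takes one of the forms in \cref{4forms}, then \cref{4formsthm} ensures that $\mathcal{S}'$ is contextual. Any determining tree witnessing this contextuality has leaves in $\mathcal{S}' \subseteq \mathcal{S}$, so the same tree witnesses contextuality of $\mathcal{S}$ via \cref{contextualitydef}.

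For the ``only if'' direction, I would prove the contrapositive: assuming no 4-subset of $\mathcal{S}$ has a compatibility graph in \cref{4forms}, the set $\mathcal{S}$ is noncontextual. The strategy has two parts: first, use the local ``no bad 4-subset'' hypothesis to extract a rigid global structure on the compatibility graph of $\mathcal{S}$; second, use that structure to exhibit a consistent joint outcome assignment whose induced values on $\overline{\mathcal{S}}$ satisfy none of the criteria in \cref{contextualitydef}, \cref{contextualitydef2}, or \cref{contextualitydef3}.

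Concretely, I would show the hypothesis forces a decomposition $\mathcal{S} = Z \sqcup C_1 \sqcup \cdots \sqcup C_k$, where $Z$ consists of operators commuting with every element of $\mathcal{S}$, each $C_i$ is a clique of pairwise-commuting operators, and any pair drawn from distinct cliques anticommutes. The crucial step is a ``transitivity of commutation'' lemma on $\mathcal{S} \setminus Z$: whenever $A, B, C \in \mathcal{S} \setminus Z$ satisfy $A$ commutes with $B$, $B$ commutes with $C$, but $A$ anticommutes with $C$, then any fourth operator $D \in \mathcal{S}$---and at least one such $D$ must exist, as a witness to the noncentrality of the triple---produces a 4-subset whose compatibility graph matches one of the forms in \cref{4forms}, by a finite case analysis on the commutation pattern of $D$ with $\{A, B, C\}$. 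The required partition then follows by taking $Z$ to be the operators commuting with all of $\mathcal{S}$ and the $C_i$ to be the equivalence classes of ``commutes with'' on $\mathcal{S} \setminus Z$.

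I expect this case analysis to be the main obstacle: it requires verifying that the four forms in \cref{4forms} truly cover every obstruction to the clique decomposition, with no residual commutation patterns overlooked. Once the decomposition is in hand, constructing the consistent joint outcome assignment is comparatively routine: outcomes for $Z \cup C_1$ come from a simultaneous eigenvector of this pairwise-commuting family, and outcomes on the remaining cliques are propagated using the cross-clique anticommutation structure. In such a structure, the ``anticommutation cycle'' that would be required to build a determining tree for $-\mathds{1}$ with empty determining set cannot close, and \cref{contextualitydef3} then certifies noncontextuality of $\mathcal{S}$.
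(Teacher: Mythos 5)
Your ``if'' direction is fine and matches the paper. The ``only if'' direction, however, has two genuine gaps. First, your ``transitivity of commutation'' lemma is false as stated: it is \emph{not} true that \emph{any} fourth operator $D$ completes a non-transitive triple $A\sim B\sim C$, $A\nsim C$ (center $B$) to one of the graphs in \cref{4forms}. If $D$ commutes with all of $A,B,C$, the compatibility graph of $\{A,B,C,D\}$ is $K_4$ minus an edge; if $D$ anticommutes only with the end $A$ (commuting with $B$ and $C$), it is a triangle with a pendant vertex; neither is among the three forms, and by \cref{4formsthm} those $4$-sets are noncontextual. What is needed --- and what the paper uses --- is a $D$ that anticommutes with the \emph{center} $B$ of the triple; such a $D$ exists precisely because $B\notin Z$, and then the finite case analysis over $D$'s relations to $A$ and $C$ yields exactly the three forms. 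Your phrase ``witness to the noncentrality of the triple'' does not pin this down, and with a generic witness the case analysis fails, so this step must be corrected, not just carried out.

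Second, and more seriously, the half of your contrapositive that does the real work --- ``clique decomposition of $\mathcal{S}\setminus Z$ implies noncontextuality'' --- is only gestured at. Noncontextuality in the sense of \cref{contextualitydef} (equivalently \cref{contextualitydef3}) requires showing that \emph{no} determining tree for $-\mathds{1}$ with empty determining set exists, i.e.\ that a value assignment consistent with all product relations on the closure $\overline{\mathcal{S}}$ exists. Assigning outcomes clique by clique and asserting that ``the anticommutation cycle cannot close'' ignores exactly the sign structure where contextuality lives: products of commuting pairs inside one clique can equal, up to sign, products of pairs inside another clique (e.g.\ for the cliques $\{XX,ZZ\}$ and $\{XZ,ZX\}$ one has $XX\cdot ZZ=-\,XZ\cdot ZX$), so consistency of your propagated assignment on $\overline{\mathcal{S}}$ is precisely what has to be proven, and the compatibility graph alone does not decide it. The paper supplies this via \cref{lemma1} (universally commuting leaves can be cancelled out of any witnessing tree) and \cref{lemma2} (tree surgery showing that over a disjoint union of cliques any tree with empty determining set has root $+\mathds{1}$); your plan, which is essentially the ``if'' direction of \cref{commutationthm}, needs an argument of that strength --- or an explicit construction of a multiplicative $\pm1$ assignment on the closed subtheory with a proof of its consistency --- before the contrapositive is established. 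Note also that the paper itself proves ``only if'' directly, extracting the non-transitive triple from the leaves of a witnessing tree after pruning, rather than via your contrapositive; your route is viable but currently rests on the two unproven (and, in one case, incorrectly stated) ingredients above.
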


The proofs of \cref{contextualcondition,4formsthm} are given in \cref{proofs}.
\cref{contextualcondition} provides an efficient algorithm for determining whether an arbitrary set $\mathcal{S}$ of Pauli measurements is contextual.
First remove any operators from $\mathcal{S}$ that commute with all others (searching for these takes $O(|\mathcal{S}|^2)$ steps): let $\mathcal{T}$ be the remaining set.
Then, search in $\mathcal{T}$ for a set of three operators $A,B,C$ such that $A$ commutes with $B$ and $C$, but $B$ and $C$ anticommute.
If such a set exists, then since there is some $D\in\mathcal{T}$ that anticommutes with $A$, the compatibility graph of $A,B,C,D$ has one of the forms \cref{4forms} (up to exchange of $B$ and $C$): thus $\mathcal{S}$ is contextual.
If no such set exists, then $\mathcal{S}$ is noncontextual.
There are $O(|\mathcal{S}|^3)$ subsets of size three in $\mathcal{S}$, so this is the runtime for the search.
In many VQE procedures some structure on the set $\mathcal{S}$ is known, which may improve the efficiency of determining whether it is contextual.

Although we ultimately only need to search for triples of operators in the algorithm, the contextual compatibility graphs in \cref{4forms} have four nodes instead of three because we must first remove universally-commuting operators.
Note that after this is done (to obtain $\mathcal{T}$), we search for a subset $\{A,B,C\}$ in which commutation is not transitive.
Each such subset represents an obstacle to commutation being an equivalence relation on $\mathcal{T}$.
This is formalized in the following theorem:

\begin{theorem}
    \label{commutationthm}
    For a set $\mathcal{S}$ of Pauli operators, let $\mathcal{T}\subseteq\mathcal{S}$ be the set obtained by removing any operator that commutes with all others in $\mathcal{S}$.
    Then $\mathcal{S}$ is noncontextual if and only if commutation is an equivalence relation on $\mathcal{T}$.
\end{theorem}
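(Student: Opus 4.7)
The plan is to derive Theorem~\ref{commutationthm} as a corollary of Theorem~\ref{contextualcondition}. Since $[P,P]=0$ and $[P,Q]=0$ iff $[Q,P]=0$ for any Pauli operators, commutation is automatically reflexive and symmetric on $\mathcal{T}$; the statement therefore reduces to showing that $\mathcal{S}$ is contextual if and only if commutation fails to be transitive on $\mathcal{T}$, i.e.\ if and only if there exist $A,B,C\in\mathcal{T}$ with $[A,B]=[A,C]=0$ and $\{B,C\}=0$.

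For the forward direction, I would assume $\mathcal{S}$ is contextual. By Theorem~\ref{contextualcondition}, $\mathcal{S}$ contains a 4-subset $\{A,B,C,D\}$ whose compatibility graph is one of the forms in \cref{4forms}. Inspection of those graphs shows that (i) each contains a ``non-transitive'' triple, i.e.\ some vertex adjacent to two others that are themselves non-adjacent; and (ii) every vertex has at least one non-edge in the graph, so each of the four operators anticommutes with at least one other element of $\mathcal{S}$ and hence lies in $\mathcal{T}$. The non-transitive triple within $\{A,B,C,D\}\subseteq\mathcal{T}$ then witnesses the desired failure of transitivity.

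For the reverse direction, suppose $A,B,C\in\mathcal{T}$ with $A$ commuting with $B$ and $C$ but $\{B,C\}=0$. Because $A\in\mathcal{T}$, there exists $D\in\mathcal{S}$ with $\{A,D\}=0$; the four operators are pairwise distinct (for instance, $D=B$ would contradict $[A,B]=0$, and $B=C$ would contradict $\{B,C\}=0$). Irrespective of how $D$ commutes or anticommutes with $B$ and $C$, the resulting compatibility graph on $\{A,B,C,D\}$ matches one of the forms in \cref{4forms}, so Theorem~\ref{contextualcondition} yields contextuality of $\mathcal{S}$.

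The main technical step is the finite case analysis on \cref{4forms} that underlies both implications: on one side, verifying that every contextual 4-graph contains a non-transitive triple supported entirely in $\mathcal{T}$; on the other, that every extension of such a triple by a vertex $D$ anticommuting with its ``apex'' $A$ reproduces a graph listed in \cref{4forms}. This correspondence is exactly what is asserted informally by the algorithmic discussion following Theorem~\ref{contextualcondition}, so Theorem~\ref{commutationthm} is essentially a reformulation of that algorithm in the language of equivalence relations; the only obstacle is confirming that all four commutation patterns for $D$ do land among the forms in \cref{4forms}, and that no form in \cref{4forms} has a universal vertex (which would otherwise allow a contextual 4-subset to contain an operator outside $\mathcal{T}$).
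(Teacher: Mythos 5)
Your proof is correct, and it differs from the paper's in one of the two directions. The contextual direction (failure of transitivity on $\mathcal{T}$ implies contextuality of $\mathcal{S}$) is essentially the paper's own argument: take the non-transitive triple $A,B,C$, use membership of $A$ in $\mathcal{T}$ to produce an anticommuting partner $D$, and observe that all four possible commutation patterns of $D$ with $B,C$ land on one of the graphs in \cref{4forms}, so \cref{contextualcondition} applies. Where you genuinely diverge is the other direction. The paper does not deduce ``equivalence relation $\Rightarrow$ noncontextual'' from the statement of \cref{contextualcondition}; it goes back to the determining-tree machinery, using \cref{lemma1} to reduce contextuality of $\mathcal{S}$ to contextuality of $\mathcal{T}$ and \cref{lemma2} (a disjoint union of cliques is noncontextual) to conclude. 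You instead take \cref{contextualcondition} as a black box and argue contrapositively: any contextual $\mathcal{S}$ contains a four-element witness of one of the forms in \cref{4forms}, none of those graphs has a universal vertex, so all four witnesses anticommute with something in $\mathcal{S}$ and hence survive the pruning into $\mathcal{T}$, where they exhibit a non-transitive triple. Both finite checks you flag (no universal vertex in any of the three graphs; all four patterns for $D$ appear among the forms) do hold, the latter being exactly the observation underlying the paper's search algorithm. Your route is shorter and purely graph-theoretic once \cref{contextualcondition} is in hand; the paper's route avoids the universal-vertex inspection and stays closer to the tree/clique lemmas, which it is reusing anyway.
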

\noindent The proof of~\cref{commutationthm} is given in \cref{proofs}.
That commutation is not transitive in general is a non-classical property.
Operators that commute with all others in the set cannot contribute to contextuality (see Lemma 2.1, in \cref{proofs}), so it is satisfying that after removing these non-transitivity of commutation is equivalent to contextuality.

Can we extend our evaluation procedure to a measure of the amount of contextuality present in a contextual set $\mathcal{S}$? One natural measure of the contextuality of $H$ is obtained by evaluating the distance from $H$ to any noncontextual Hermitian operator, as suggested in \cite{duarte18a}. Any choice of metric on observables will induce such a  measure. Let a \emph{decontextualizing set} $\mathcal{S}'$ be any subset of $\mathcal{S}$ such that $\mathcal{S}\setminus\mathcal{S}'$ is noncontextual.
Then we may define another measure of contextuality as the minimum of $\sum_j|h'_j|$ over all subsets $\{h'_j\}$ of the coefficients that are associated to decontextualizing sets. This measure provides an upper bound on the error in the energy estimate induced by ``decontextualizing" the Hamiltonian. We discuss generalizations of these measures, and their relations with previously studied measures in \cref{measures}.

\section{Evaluation of Contextuality in VQE experiments to date}
\label{vqecontextualitytodate}

We now use the methods in \cref{contextuality} to assess contextuality in VQE experiments performed to date.
The results are summarized in \cref{vqecontextualitytable}, in which we also give $\text{CD}_0$, a measure of contextuality given by the minimum size of any decontextualizing set as a fraction of the total number of terms.
For the larger Hamiltonians, we use a heuristic approximation for $\text{CD}_0$: see \cref{measures} for details about this method and about the experiments.
Note that each simulation of $H_2$ in the STO-3G minimal basis is noncontextual.
This is not surprising if one considers these simulations as encoding a two-dimensional Hilbert space spanned by a bonding and antibonding state, i.e., a single qubit, for which Bell gave a noncontextual hidden-variable theory~\cite{bell66a}.

\begin{table}[htp]
  \begin{tabular}{ l l c r r}
    Citation: & System: & \hspace{-0.15in}Contextual? & $\text{CD}_0$ & $|\mathcal{S}|$\\
    \hline
    Dumitrescu \emph{et al.} \cite{dumitrescu18a}\ \ & Deuteron\quad\ & No & 0 & --- \\
    \hline
    Kandala \emph{et al.} \cite{kandala2017hardware}\quad\ & H$_2$\quad\ & No & 0 & 4 \\
    \hline
    O'Malley \emph{et al.} \cite{omalley16a}\quad\ & H$_2$\quad\ & No & 0 & 5 \\
    \hline
    Hempel \emph{et al.} \cite{PhysRevX.8.031022}\quad\ & H$_2$ (BK)\quad\ & No & 0 & 5 \\
    \hline
    Hempel \emph{et al.} \cite{PhysRevX.8.031022}\quad\ & H$_2$ (JW)\quad\ & No & 0 & 14 \\
    \hline
    Colless \emph{et al.} \cite{colless18a}\quad\ & H$_2$\quad\ & No & 0 & 5 \\
    \hline
    Kokail \emph{et al.} \cite{kokail19a}\quad\ & \hspace{-0.3in}Schwinger Model & Yes & $\sim$0.14 & ~231 \\
    \hline
    Nam \emph{et al.} \cite{nam19a}\quad\ &$\text{H}_2$O\quad\ & Yes & 0.27 & 22 \\
    \hline
    Hempel \emph{et al.} \cite{PhysRevX.8.031022}\quad\ & LiH\quad\ & Yes & 0.33 & 13 \\
    \hline
    Peruzzo \emph{et al.} \cite{peruzzo2014variational}\quad\ & HeH$^+$\quad\ & Yes & 0.38 & 8 \\
    \hline
    Kandala \emph{et al.} \cite{kandala2017hardware}\quad\ & BeH\quad\ & Yes & $\sim$0.74 & 164 \\
    \hline
    Kandala \emph{et al.} \cite{kandala2017hardware,kandala19a}\quad\ & LiH\quad\ & Yes & $\sim$0.77 & 99
  \end{tabular}
\caption{Evaluation of contextuality in VQE experiments. $\text{CD}_0$ is the minimum number of terms we must remove from the Hamiltonian to reach a noncontextual set, as a fraction of the total number of terms ($|\mathcal{S}|$). In \cite{dumitrescu18a}, $|\mathcal{S}|$ varies.}\label{vqecontextualitytable}
\end{table}

\section{Discussion}
\label{discussion}

All VQE procedures that have been implemented to date, whether noncontextual or contextual, have been small enough to simulate classically.
The purpose of such experiments is not to demonstrate quantum advantage, but to apply current hardware to small examples of real-world applications.
Such efforts have been instrumental in developing both experimental and theoretical capabilities; indeed, VQE itself was developed in this context~\cite{peruzzo2014variational}.

For these reasons, we should be clear that our classification of these experiments as contextual or noncontextual is not a judgement of the value of the experiments, but rather a constructive categorization whose purpose is to inform future experiments and theoretical work.
Contextuality of a Hamiltonian according to our definition is connected to inefficiency of classical simulation~\cite{karanjai18a}.
Furthermore, as noted above, we may regard a noncontextual Hamiltonian as an instance of an essentially classical problem, akin to quantum algorithms for explicitly classical problems as in QAOA~\cite{farhi2014quantum} (note that QAOA's diagonal Hamiltonians are always noncontextual.)

In spite of this last point, however, a noncontextual VQE procedure may still be hard to simulate classically, since classical problems can be classically hard.
However, contextuality in a VQE procedure provides a strict separation between it and any classical algorithm, by ruling out the existence of a description of the problem in terms of joint probability distributions over a classical phase space, and thus precluding any classical approach either explicitly or implicitly based on such distributions.
We suggest therefore that future VQE implementations, even at small scales, should focus on contextual Hamiltonians, according to the criteria we have developed.

Our criterion for contextuality of a set of Pauli operators $\mathcal{S}$ is that joint outcome assignments to $\overline{\mathcal{S}}$ are necessarily self-contradictory.
In other words, we analyze contextuality for the minimal closed subtheory containing $\mathcal{S}$; this allows us to invoke the results of \cite{karanjai18a}, which show that efficient simulation by sampling from the discrete Wigner function is only possible in the absence of contextuality.
This is not the only choice: for example, \cite{abramsky11a,ramanathan12a,cabello18a} do not require the measurements to form a closed subtheory.
The relationship of our criterion to that of \cite{abramsky11a,ramanathan12a,cabello18a} is discussed further in \cref{relations}.

The set of noncontextual Hamiltonians contains the set of commuting Pauli Hamiltonians, but is distinct from the set of frustration-free Hamiltonians, as may be seen by decomposing two consecutive projectors in the AKLT model (e.g., \cite{affleck2004rigorous}) into Pauli operators.
We leave further consideration of the set of noncontextual Hamiltonians to future work.

Subsequent to the appearance of our work, the result given in our \cref{contextualcondition} was independently discovered in \cite[\S IV]{raussendorf19b}, which presents a Wigner function treatment of qubit systems using a phase space constructed from noncontextual closed subtheories.

\section*{Acknowledgements}
W.M.K. acknowledges support from the National Science Foundation, Grant No. DGE-1842474. P.J.L. acknowledges support from the National Science Foundation, Grant No. PHY-1720395, and from Google, Inc. This work was supported by the NSF STAQ project (PHY-1818914).

\bibliography{references}

\appendix

\section{Proofs}
\label[appendix]{proofs}

We first show that we may restrict our attention to binary determining trees (see Definition 1, in the main text).
This lemma does not appear in the main text, but will be useful in the proofs that follow.
\setcounter{theorem}{1}
\begin{lemma}
\label{binarydettree}
    Given any $k$-ary determining tree, there exists an equivalent binary determining tree, i.e., one that has the same leaves and root.
\end{lemma}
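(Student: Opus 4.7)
My plan is to induct on the ``excess arity'' $\Phi(\tau):=\sum_v \max(\mathrm{arity}(v)-2,\,0)$, summed over internal nodes $v$ of $\tau$. If $\Phi(\tau)=0$ then $\tau$ is already binary and there is nothing to do. Otherwise, pick any internal node $P$ with $k\geq 3$ children $c_1,\ldots,c_k$, which pairwise commute by property 2 of \cref{dettreedef}, and perform the following local rewrite: introduce a new internal node $Q$ labeled by the (signed) Pauli operator $c_1c_2$, give $Q$ the children $c_1$ and $c_2$, and replace the child list of $P$ by $(Q,c_3,c_4,\ldots,c_k)$.

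I would then verify that the three properties of \cref{dettreedef} still hold for the modified tree. Property 1 is immediate because the root and everything strictly above $P$ are untouched. Property 3 holds at $Q$ by the definition of its label, and at $P$ because $Q\cdot c_3\cdots c_k = c_1c_2c_3\cdots c_k$, which is the original label of $P$. Property 2 at $Q$ is just $[c_1,c_2]=0$, which was assumed at $P$. The only nontrivial check is property 2 at the rewritten $P$: I must show that $Q=c_1c_2$ commutes with each $c_i$ for $i\geq 3$. This follows from the elementary identity that if two operators each commute with a third, so does their product, applied here to $c_1,c_2$ and $c_i$. The leaves of the subtree rooted at $P$, and hence the leaf multiset of the entire tree, are unchanged, so the rewritten tree is equivalent to $\tau$ in the sense required by the lemma.

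The rewrite strictly decreases $\Phi$ by $1$: the arity of $P$ drops from $k$ to $k-1$, while the new node $Q$ has arity $2$ and contributes $0$ to $\Phi$. Hence iterating terminates after at most $\Phi(\tau)$ steps and yields an equivalent tree in which no internal node has arity greater than $2$. If one additionally wants every internal node to have arity exactly $2$, a trivial cleanup handles it: an arity-$1$ node has a single child whose label, by property 3, equals the node's own label, and may therefore be contracted without affecting the leaves or root. The only ``obstacle'' worth flagging is the commutation check at the rewritten $P$; once that is observed, the rest is bookkeeping around the monovariant $\Phi$.
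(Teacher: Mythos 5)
Your proof is correct and follows essentially the same route as the paper's: a local rewrite that replaces several commuting children by a new node labeled with their product, justified by the observation that a product of operators each commuting with a third commutes with that third, iterated until every parent has two children. The paper merges the first $m-1$ children into one new node while you split off pairs and track termination with an explicit monovariant, but these are only bookkeeping differences.
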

\begin{proof}
    A $k$-ary determining tree is one in which each parent has at most $k$ children.
    Given any $k$-ary determining tree, consider any parent $A$ with $m$ children $\{B_1,...,B_m\}$, for $2<m\le k$: then
    \begin{equation}
        A=\prod_{i=1}^mB_i.
    \end{equation}
    Since the $B_i$ all commute, $B_m$ commutes with
    \begin{equation}
        C\equiv\prod_{i=1}^{m-1}B_i.
    \end{equation}
    Therefore, we may replace the $\{B_1,...,B_{m-1}\}$ as children of $A$ by $C$, which itself has $\{B_1,...,B_{m-1}\}$ as children.
    $A$ now has only $2$ children, and the new node $C$ has $m-1$ children.
    We iterate this operation to obtain a cascade of parents with exactly $2$ children, terminating with the parent $B_1B_2$, which has children $B_1$ and $B_2$.
    Applying this process to every parent in the original $k$-ary tree results in an equivalent binary determining tree, in the sense that the root and leaves are identical and it is still a valid determining tree.
\end{proof}
\setcounter{theorem}{0}

We now give proofs for the results presented in the main text.

\noindent
\textbf{Corollary 3.1.}
\emph{
    A set $\mathcal{S}$ of Pauli operators is contextual if and only if for some $B\in\mathcal{S}$ there exists a determining tree for $-B$ over $\mathcal{S}$, whose determining set is $\{B\}$ (the set containing the single element $B$).
}
\begin{proof}
    ``If" follows because $B$ can be taken to be a determining tree for itself (it is both the root and the single leaf).
    If there also is a determining tree for $-B$ whose determining set is $\{B\}$, then these two trees satisfy the criteria of Definition 3 in the main text.
    
    ``Only if": given determining trees for $\pm A$ with the same determining set $\mathcal{D}$, we may select any operator $B$ in $\mathcal{D}$ and construct a determining tree for $-B$ whose determining set is exactly $\{B\}$.
    This implies that for any outcome assignment $\rho_B=\pm1$ to $B$, $\rho_B=-\rho_B$, a contradiction.

    The construction goes as follows.
    Let the two determining trees be $\tau$ and $\tau'$, as in Definition 3 in the main text.
    Assume that $\tau$ is in binary tree form (see \cref{binarydettree}).
    Let $p=\{C_0,C_1,C_2,...,C_d\}$ denote the path from $A$ to $B$ in $\tau$, where $d$ is the depth of $\tau$ (so $C_0=A$ and $C_d=B$), and let $C_i'$ be the sibling of $C_i$ for each $i>0$.
    Further, let $\{\tau_1,\tau_2,...,\tau_d\}$ be the subtrees of $\tau$ with $\{C_1',C_2',...,C_d'\}$ as their roots.
    Thus, every leaf of $\tau$ except $B$ itself is a leaf of exactly one of the $\tau_i$.

    First, construct a new determining tree $T_1$ for $-C_1$ by letting its children be $-A$ (with $\tau'$ attached) and $C_1'$ (with $\tau_1$ attached).
    $T_1$ is a determining tree for $-C_1$ because $C_1C_1'=A$ (since this is the topmost parent-children group in $\tau$), and thus $-AC_1'=-C_1$, since $C_1'$ is self-inverse and commutes with $C_1$ and $A$.
    The leaves in $T_1$ are the leaves of $\tau'$ as well as the leaves of $\tau_1$.
    Next, construct a new determining tree $T_2$ for $-C_2$ by letting its children be $-C_1$ (with $T_1$ attached) and $C_2'$ (with $\tau_2$ attached).
    As in the previous step, $C_2C_2'=C_1\ \Rightarrow\ -C_1C_2'=-C_2$, so this is a valid determining tree, and its leaves are the leaves of $\tau'$, $\tau_1$, and $\tau_2$.
    We iterate this step until we obtain a determining tree $T_d$ for $-C_d=-B$.
    The leaves of $T_d$ will be the leaves of $\tau',\tau_1,\tau_2,...,\tau_d$.
    As we noted above, the set of leaves of $\tau_1,\tau_2,...,\tau_d$ is exactly the set of leaves of $\tau$, except for $B$ itself, so the set of leaves of $T_d$ is the set of leaves of $\tau$ plus the set of leaves of $\tau'$, minus exactly one copy of $B$.
    By assumption, $\tau$ and $\tau'$ have the same determining set $\mathcal{D}$.
    Thus, every element of $\mathcal{D}$ appears as a leaf in $T_d$ an even number of times (as does every other leaf in $\tau$ and $\tau'$), except for $B$, which appears an odd number of times.
    Therefore, the determining set for $T_d$ is exactly $\{B\}$, so as discussed above, we have a contradiction.
\end{proof}

We illustrate this construction using the Peres-Mermin square example.
Let $\tau$ and $\tau'$ be the left and right determining trees in Fig. 1 in the main text, and $A=YY$.
Let us choose $B$ to be $XI$.
Then the construction requires two steps.
In the first step, $C_1=XZ$, $C_1'=ZX$: $\tau_1$ is given by the first tree in \cref{contextualitydef2example}.
The resulting $T_1$ is a determining tree for $-C_1=-XZ$, given by the second tree in \cref{contextualitydef2example}.
In the second step, $C_2=B=XI$, $C_2'=IZ$, and $\tau_2$ is simply the node $IZ$, since $IZ$ has no children in $\tau$.
Thus $T_2$ is a determining tree for $-C_2=-XI$, given by the final tree in \cref{contextualitydef2example}.
Notice that each operator appears twice as a leaf except for $XI$ itself, so we have a contradiction, as expected.

\begin{figure}[ht]
	\caption{Example of Corollary 3.1: steps to construct a contextual determining tree for $-XI$, beginning with the Peres-Mermin square trees in Fig. 1 in the main text. Note that $-XI$ is the root of $T_2$, the final tree, and $\{XI\}$ is the determining set, since every other leaf appears twice.\label{contextualitydef2example}}
	\centering
	\includegraphics[width=0.85in]{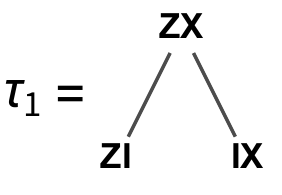}\\
    \includegraphics[width=2in]{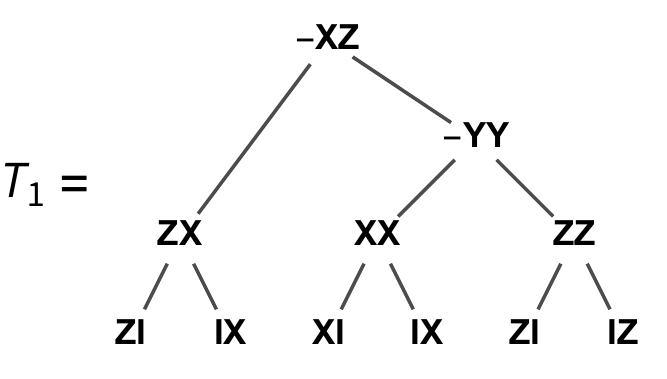}\\
    \includegraphics[width=2.5in]{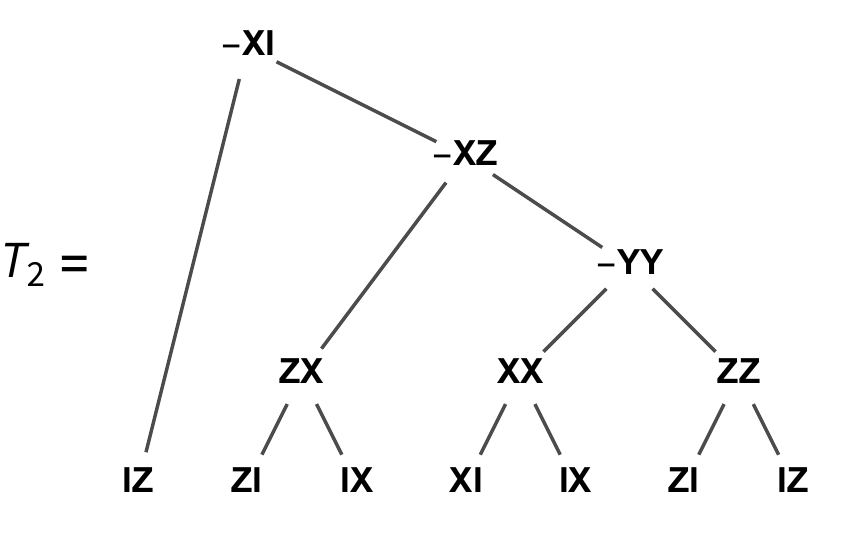}
\end{figure}

\noindent
\textbf{Corollary 3.2.}
\emph{
    A set $\mathcal{S}$ of Pauli operators is contextual if and only if there exists a determining tree for $-\mathds{1}$ (the identity of appropriate dimension) over $\mathcal{S}$, whose determining set is empty.
    (As noted in Definition 2 in the main text, if a determining set for an operator is empty it means that every outcome for that operator is 1, which in this case is an immediate contradiction.)
}
\begin{proof}
    ``Only if": Let $\mathcal{S}$ be contextual.
    Then by Definition 3 in the main text, there exists some Pauli $A$ and determining trees $\tau$ and $\tau'$ for $A$ and $-A$ (respectively) over $\mathcal{S}$, such that the determining sets of $\tau$ and $\tau'$ are identical.
    We construct a determining tree $\tau''$ whose root is $-\mathds{1}$ with children $A$ and $-A$, and $\tau$ and $\tau'$ attached to these.
    Since the determining sets of $\tau$ and $\tau'$ are identical, each element in them appears an even number of times as a leaf in $\tau''$, and is thus not in its determining set.
    Leaves of $\tau$ and $\tau'$ that are not in their determining sets are also not in the determining set of $\tau''$, so the determining set of $\tau''$ must be empty.
    
    ``If": Suppose there exists a determining tree $\tau$ for $-\mathds{1}$ over $\mathcal{S}$ whose determining set is empty.
    By \cref{binarydettree}, we may take $\tau$ to be a binary tree.
    Then the children of $-\mathds{1}$ in $\tau$ must be $A$ and $-A$ for some $A\in\mathcal{S}$.
    Every leaf in $\tau$ is therefore either a leaf in the subtree $\tau'$ whose root is $A$ or a leaf in the subtree $\tau''$ whose root is $-A$.
    Since the determining set of $\tau$ is empty, each leaf $L$ in $\tau$ appears an even number of times.
    Therefore, $L$ either appears an even number of times in both $\tau'$ and $\tau''$ (in which case it is in neither determining set), or an odd number of times in both (in which case it is in both determining sets).
    Since this argument applies to every leaf in $\tau$, the determining sets of $\tau'$ and $\tau''$ must be identical, and thus $\mathcal{S}$ is contextual, by Definition 3 in the main text.
\end{proof}

\begin{figure}[h!]
	\caption{Compatibility graphs for contextual sets of four Pauli operators. (This figure also appears as Fig. 2 in the main text.)\label{4forms_app}}
	\centering
	\includegraphics[width=2.2in]{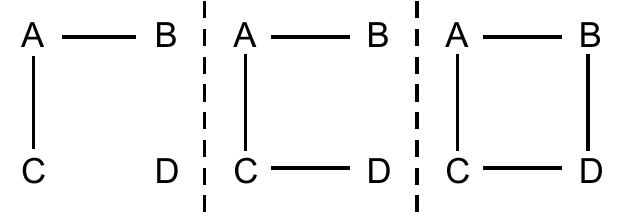}
\end{figure}

\setcounter{theorem}{0}
\begin{theorem}
	\label{4formsthm}
	A set of four Pauli operators $\{A,B,C,D\}$ is contextual if and only if its compatibility graph has one of the forms given in \cref{4forms_app} (up to permutations of the operators).
\end{theorem}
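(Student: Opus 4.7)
The plan is to treat both directions by a case analysis on the compatibility graph of $\{A, B, C, D\}$. First I would invoke Lemma~2.1 of the appendix (any operator commuting with every other element can be removed without changing contextuality), together with the easily verified fact that no three-Pauli set is contextual, to restrict attention to compatibility graphs in which every vertex has at least one anticommuting partner, i.e., maximum degree two. Up to isomorphism this leaves seven graphs on four vertices: the cluster graphs $\overline{K_4}$, $K_2+2K_1$, $2K_2$, $K_3+K_1$, and the non-cluster graphs $P_3+K_1$, $P_4$, $C_4$. The last three are exactly the forms shown in Fig.~\ref{4forms_app}, and they are characterized as the four-vertex maximum-degree-two graphs containing an induced $P_3$ (a cherry $B\textrm{-}A\textrm{-}C$ with anticommuting wings $B, C$ and center $A$) together with a fourth vertex $D$ anticommuting with $A$.

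For the noncontextual direction, I would show each cluster-graph case admits a consistent joint outcome assignment on $\overline{\mathcal{S}}$. The structural observation is that commutation is an equivalence relation on $\mathcal{S}$ in each cluster graph, so the Hermitian operators reachable as roots of determining trees are products of commuting generators whose signed values are uniquely fixed by the signs assigned to the generators. Consequently an arbitrary assignment $\rho : \mathcal{S} \to \{\pm 1\}$ extends consistently to $\overline{\mathcal{S}}$, ruling out contextuality by Corollary~3.1. The only subtle point is to verify that cross-clique intermediate products which happen to commute (such as $AB$ and $CD$ in $2K_2$, where the commutation arises from an even number of anticommuting pairs among the factors) do not introduce inconsistencies.

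For the three contextual cases, I would construct explicit pairs of trees satisfying Definition~3, each pair producing roots $+X$ and $-X$ with identical determining set for some Hermitian $X \in \overline{\mathcal{S}}$. In $C_4$, $\{A,B,C,D\}$ itself realizes the Peres--Mermin 4-cycle: the pairings $(AB)(CD)$ and $(AC)(BD)$ are valid since each sibling pair commutes (two anticommuting pairs among the factors), and they yield roots $ABCD$ and $ACBD = -ABCD$ (swapping $B$ and $C$ contributes $-1$), both with determining set $\{A,B,C,D\}$. In $P_3+K_1$, the operators $ABD=(AB)(D)$ and $ACD=(AC)(D)$ lie in $\overline{\mathcal{S}}$ (each sibling pair commutes by the same parity argument), and trees rooted in $C\cdot ABD=-ABCD$ and $B\cdot ACD=+ABCD$ share determining set $\{A,B,C,D\}$. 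In $P_4$, $ABD=(AB)(D)$ and $BCD=(B)(CD)$ are both buildable, commute with each other (four anticommuting pairs across the factors), and their product reduces to $(ABD)(BCD)=-AC$ via cancellation of $B^2$ and $D^2$ together with a single anticommuting swap; this contradicts the direct tree for $+AC$ with children $A, C$, both trees having determining set $\{A, C\}$.

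The main obstacle is identifying, for each contextual form, a pair of trees producing opposite-sign roots with identical determining sets; the three cases genuinely require three different constructions targeting different operators ($\pm ABCD$ for $C_4$ and $P_3+K_1$, $\pm AC$ for $P_4$). At every level of each tree the commutation of siblings must be confirmed via the symplectic-parity formula, and the target operator must itself have an even number of anticommuting factor pairs so that it is Hermitian and the $\pm 1$ outcome framework applies; this last check rules out seemingly natural candidates such as $\pm ABCD$ in $P_4$, where $ABCD$ is anti-Hermitian and hence never a root of any determining tree over $\mathcal{S}$.
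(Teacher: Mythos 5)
Your contextual (``if'') direction is sound. I checked all three constructions: in $C_4$ the sibling pairs in $(AB)(CD)$ and $(AC)(BD)$ do commute and the roots are $\pm ABCD$ with determining set $\{A,B,C,D\}$; in $P_3+K_1$ the trees rooted at $C\cdot(ABD)=-ABCD$ and $B\cdot(ACD)=+ABCD$ are valid; and in $P_4$ the product $(ABD)(BCD)$ does reduce to $-AC$, contradicting the direct tree for $+AC$ with determining set $\{A,C\}$ (and you are right that $\pm ABCD$ is excluded there because it is anti-Hermitian). This is equivalent in content to the paper's route, which instead exhibits, for each of the three graphs, a single determining tree for $-\mathds{1}$ with empty determining set and invokes Corollary 3.2 rather than Definition 3. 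Your enumeration of the seven maximum-degree-two graphs on four vertices, with $P_3+K_1$, $P_4$, $C_4$ as the only non-cluster ones, is also correct.

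The genuine gap is in the noncontextual (``only if'') direction. For the four cluster graphs $\overline{K_4}$, $K_2+2K_1$, $2K_2$, $K_3+K_1$ your argument is: commutation is an equivalence relation on $\mathcal{S}$, ``so'' the signed value of any root of a determining tree is uniquely fixed by the signs assigned to the leaves. That implication is precisely the nontrivial content of the paper's \cref{lemma2} (equivalently, one direction of \cref{commutationthm}); it does not follow from the equivalence-relation property by inspection, because in general the sign of a root is \emph{not} a function of the leaf multiset --- that failure is exactly what contextuality is, as the Peres--Mermin trees show. The ``subtle point'' you flag (cross-clique products that commute for parity reasons, e.g.\ $AB$ and $CD$ in $2K_2$, so that $ABCD$, $ABC$, $ABD$, $ACD$, $BCD$ all lie in $\overline{\mathcal{S}}$) is the entire crux and cannot be deferred: in $C_4$ it is exactly such cross products that produce the contradiction, so a structural argument must separate the two situations. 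You need either the paper's tree-surgery proof (use \cref{lemma1} to push parents of same-clique leaves up to the root, merge them clique by clique, and conclude that any tree with empty determining set has root $+\mathds{1}$), or, since only four small graphs remain, an explicit verification that no determining tree for $-\mathds{1}$ with empty determining set exists over any of them. The same unproven fact also underlies your preliminary claim that ``no three-Pauli set is contextual,'' used to dispose of graphs with a degree-three vertex: it is true, but its commuting cases need the same lemma (note that $\{XX,ZZ,YY\}$ pairwise commute yet multiply to $-\mathds{1}$, so consistency on $\overline{\mathcal{S}}$ is not vacuous even there). As written, the half of the theorem that carries the real work is asserted rather than proved.
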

\begin{proof}
	``If": To show that any set of four Pauli operators whose compatibility graph has one of the forms \cref{4forms_app} is contextual, we construct a determining tree for $-\mathds{1}$ with empty determining set over a set of Pauli operators with each form.
	The three compatibility graphs and their corresponding trees are shown in \cref{4formsfigure1,4formsfigure2,4formsfigure3}.
	Using the compatibility graphs, one can easily validate the determining trees, and thus each of the corresponding sets of Pauli operators is contextual.
	
	``Only if" will follow directly from the ``Only if" implication in \cref{contextualcondition}.
\end{proof}
\begin{figure}[ht]
	\caption{First compatibility graph in \cref{4forms_app}, together with its contextual determining tree. Note that each leaf in the tree appears twice, so the determining set is empty.\label{4formsfigure1}}
	\centering
	\includegraphics[width=2.15in]{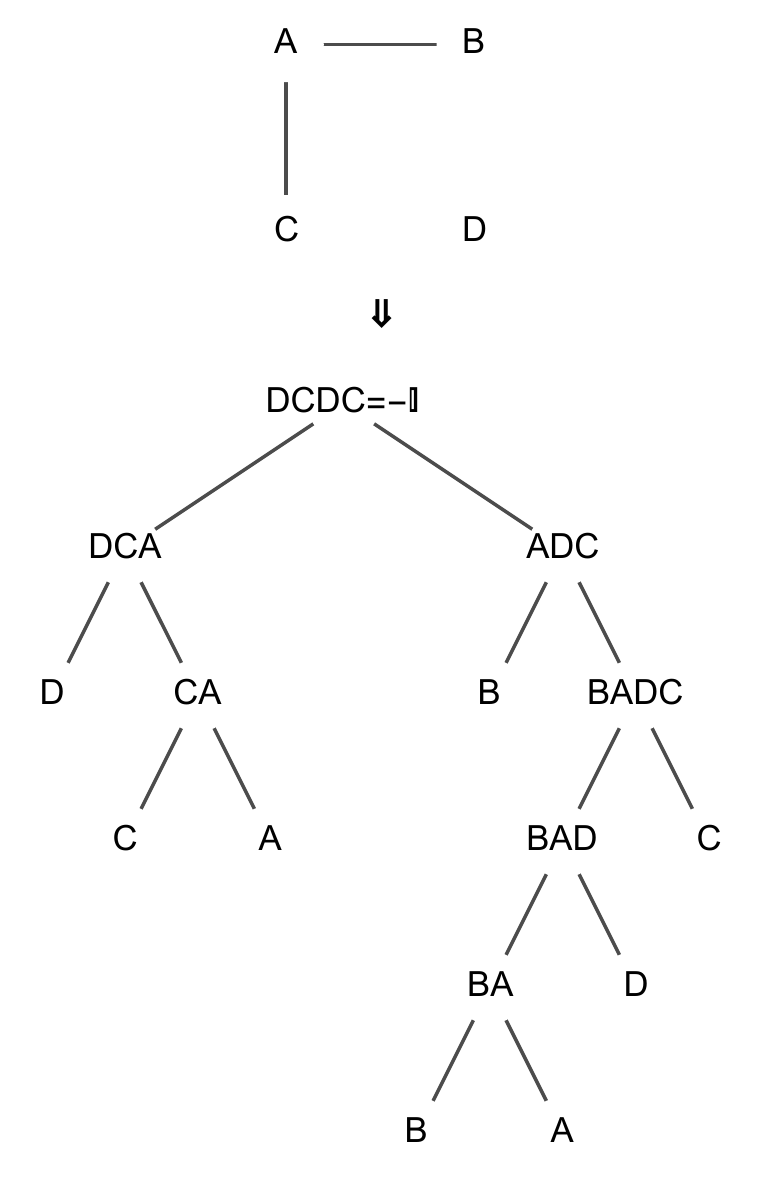}
\end{figure}
\begin{figure}[ht]
	\caption{Second compatibility graph in \cref{4forms_app}, together with its contextual determining tree. Note that each leaf in the tree appears twice, so the determining set is empty.\label{4formsfigure2}}
	\centering
	\includegraphics[width=2.5in]{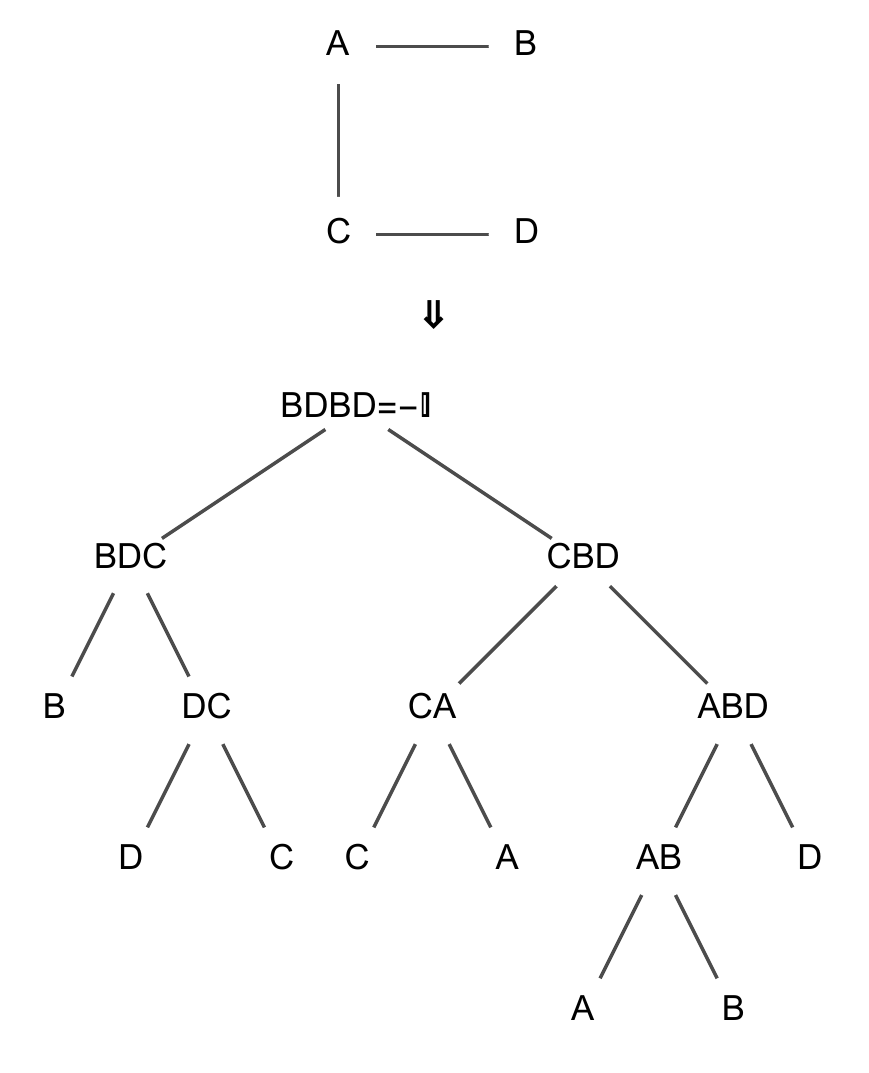}
\end{figure}
\begin{figure}[ht]
	\caption{Third compatibility graph in \cref{4forms_app}, together with its contextual determining tree. Note that each leaf in the tree appears twice, so the determining set is empty.\label{4formsfigure3}}
	\centering
	\includegraphics[width=3.2in]{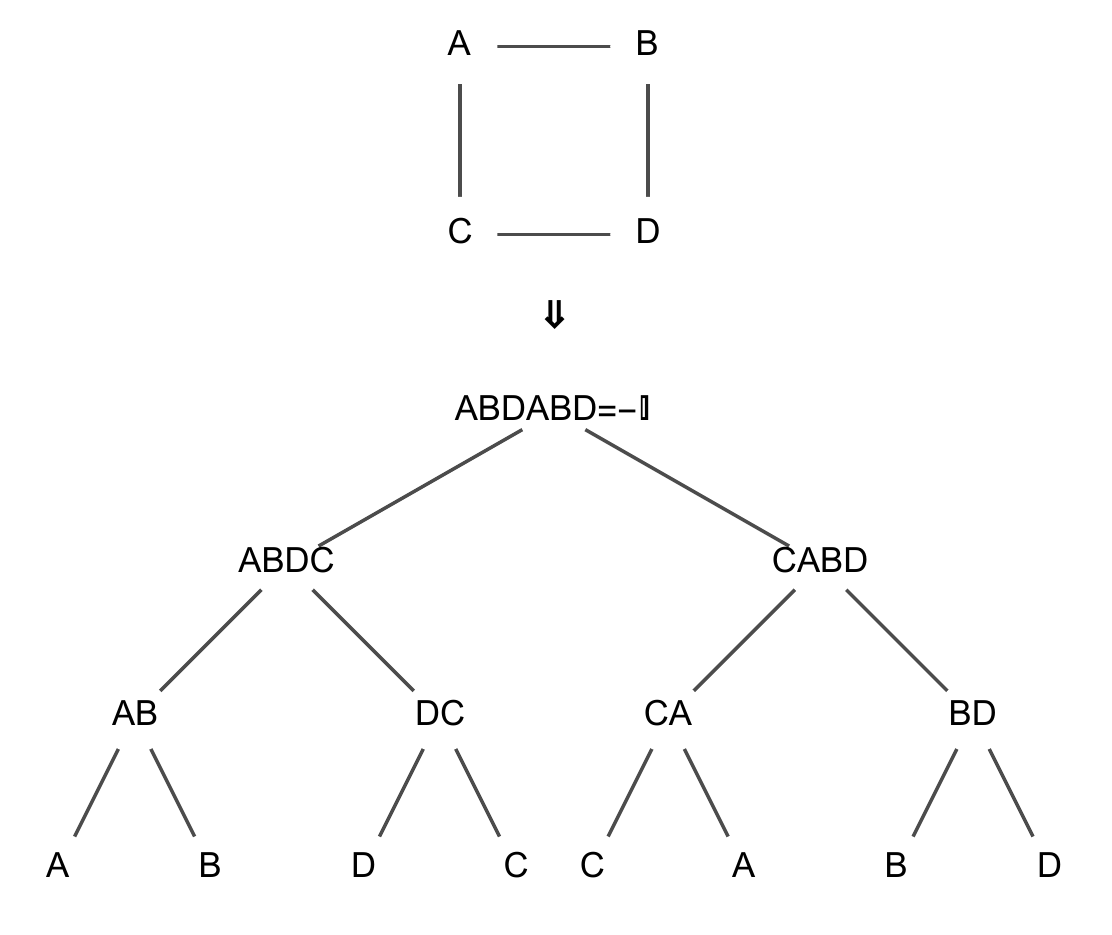}
\end{figure}

We now prove two lemmas that will be useful in the proof of Theorem 2.

\setcounter{theorem}{2}
\begin{lemma}
    \label{lemma1}
    Suppose $\mathcal{T}$ is a set of Pauli operators and $\tau$ is a determining tree over $\mathcal{T}$ with root $R$.
    If any Pauli operator $F$ (which need not be in $\mathcal{T}$) commutes with every operator in $\mathcal{T}$, and $F$ is a node in $\tau$, we may construct a new determining tree $\tau'$ over $\mathcal{T}$ with root $R$, in which $F$ only appears as a child of $R$.
\end{lemma}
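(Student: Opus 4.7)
The plan is to construct $\tau'$ by induction on the maximum depth at which $F$ appears in $\tau$, using a local tree-rewriting move that pushes a single occurrence of $F$ one level closer to the root.

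For the basic move, I would pick an occurrence of $F$ at some depth $d \ge 2$, with parent $P$ and grandparent $P'$, and write the remaining children of $P$ as $G_1,\ldots,G_k$ (so $P = F G_1 \cdots G_k$) and the remaining children of $P'$ as $H_1,\ldots,H_m$. I would then modify the tree by giving $P'$ the new children $\{F, P'', H_1, \ldots, H_m\}$, where $P''$ is a newly introduced node whose children are exactly $G_1,\ldots,G_k$ (so the operator attached to $P''$ is $G_1 \cdots G_k$). The targeted $F$ now sits at depth $d-1$; the root $R$, the leaves of the tree, and everything outside this local region are untouched.

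The verification step checks that the modified tree is still a determining tree over $\mathcal{T}$ with root $R$. Since every node of $\tau$ is the operator product of the leaves below it and those leaves lie in $\mathcal{T}$, the hypothesis that $F$ commutes with all of $\mathcal{T}$ promotes to: $F$ commutes with every node of $\tau$, in particular $[F,P] = [F,H_j] = 0$. From $P = F P''$ and $[P,H_j] = 0$ (they were siblings in $\tau$) one obtains $[P'', H_j] = 0$, so the new children of $P'$ pairwise commute and have product $F \cdot P'' \cdot H_1 \cdots H_m = P H_1 \cdots H_m = P'$. The children of $P''$ pairwise commute (they were already siblings in $\tau$) and multiply to $P''$ by definition, so both determining-tree conditions survive the local rewrite.

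For termination, I would induct on the maximum depth of any $F$-occurrence. The move strictly decreases this depth for the targeted $F$ and introduces no new $F$-labeled node, except in the degenerate case where $P'' = P F$ happens to equal $F$, which forces $P = \mathds{1}$; in that case I would instead simply delete the subtree at $P$ from $P'$'s child list, which is harmless because $P = \mathds{1}$ contributes trivially to $P' = H_1 \cdots H_m$, and which erases every $F$ living in that subtree at once. After finitely many moves every $F$ occurs only at depth $1$, i.e., as a child of $R$. The main obstacle is the $P = \mathds{1}$ corner case and the careful commutation bookkeeping at $P'$; once these are dispatched, the rest is a routine induction on depth.
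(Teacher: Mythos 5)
Your construction is correct and is essentially the paper's own argument: both rest on the observation that, since every node of $\tau$ is a product of leaves lying in $\mathcal{T}$, the operator $F$ commutes with every node of the tree, which licenses a local commute-and-rebracket rotation that floats an occurrence of $F$ one level toward the root, iterated until all occurrences are children of $R$. The differences are organizational: the paper first passes to binary form (its Lemma~1.1) and rotates a depth-three subtree, whereas you rotate directly at a $k$-ary node; and you treat termination and the corner case $P''=F$ (equivalently $P=\mathds{1}$) explicitly, which the paper's ``by repeating transformations of this form'' glosses over. Two small remarks. First, a single move need not lower the \emph{maximum} depth of an $F$-occurrence when several occurrences sit at that depth; but since your rotation pushes nothing deeper, the sum of depths of $F$-occurrences strictly decreases under both your generic move and your deletion step, so the induction goes through with that measure. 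Second, your corner-case fix (deleting the subtree rooted at $P=\mathds{1}$) does prove the lemma as stated, but it changes the multiset of leaves, whereas the paper's move is a pure rebracketing that preserves the leaves and hence the determining set --- a property the later applications (Lemma~2.2 and Theorem~2, e.g.\ the step ``$C$ \ldots must have even multiplicity'') implicitly rely on when they invoke this lemma. If you want the leaf-preserving version, perform the rotation even when $P=\mathds{1}$, accept the newly created $F$-labeled node at depth $d-1$, and note that the process still terminates because moving one occurrence up a level spawns at most one new $F$-node at strictly smaller depth, giving a finite recursive bound on the total number of moves.
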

\begin{proof}
    Suppose $F$ is a node in $\tau$ that is not a child of $R$.
    Then there is a subtree of $\tau$ with depth 3 (contains 3 layers), containing $F$ as a leaf (assume $\tau$ has been written in binary form).
    This subtree is the left tree in \cref{subprooffigure}: $F$, $\tau_1$, $\tau_2$, and $\tau_3$ may themselves be the roots of subtrees.
	Since $F$ commutes with all operators in $\mathcal{T}$, it commutes with any product of them, so it commutes with $\tau_2\tau_3$.
	Thus since $F\tau_1$ commutes with $\tau_2\tau_3$ (they are children of the same parent), $\tau_1$ must commute with $\tau_2\tau_3$ as well.
	So, we can replace the whole subtree by the right tree in \cref{subprooffigure}.
	Note that $F$ is now a child of the root in this subtree.
	By repeating transformations of this form, we may move all instances of $F$ up until they are children of $R$, the root of $\tau$.
\end{proof}
\begin{figure}[h!]
	\caption{Sequence of trees in \cref{lemma1}, demonstrating that since $F$ commutes universally, it may be moved up to become a child of the root.\label{subprooffigure}}
	\centering
	\includegraphics[width=3.2in]{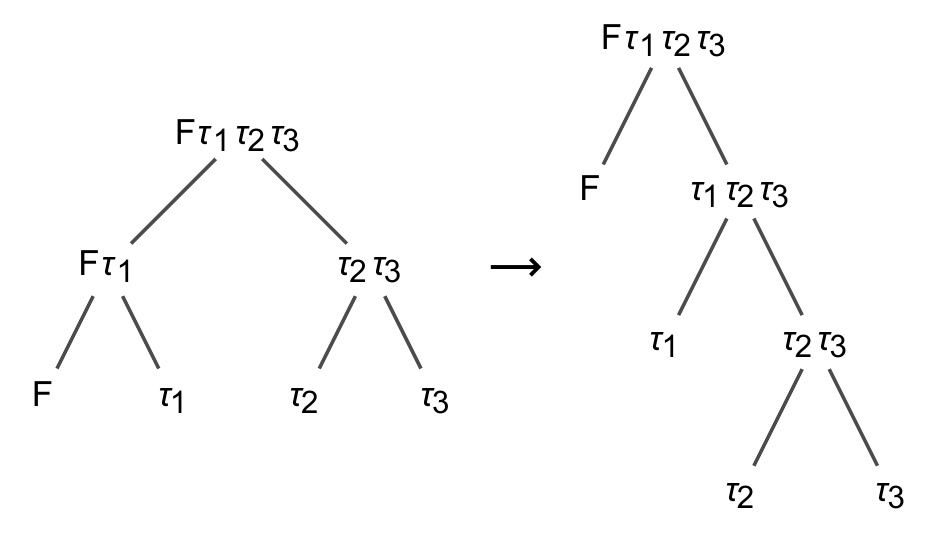}
\end{figure}

\begin{lemma}
    \label{lemma2}
    If the compatibility graph $G$ for a set of Pauli operators $\mathcal{T}$ is a disjoint union of cliques (complete subgraphs), then $\mathcal{T}$ is noncontextual.
\end{lemma}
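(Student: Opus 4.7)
The plan is to apply \cref{contextualitydef3}: it suffices to show that no determining tree over $\mathcal{T}$ with root $-\mathds{1}$ has empty determining set. By \cref{binarydettree} I restrict attention to binary trees. Write the clique decomposition as $\mathcal{T}=\mathcal{T}_1\sqcup\cdots\sqcup\mathcal{T}_k$, and to each node $v$ assign the \emph{clique-parity vector} $p(v)\in\mathbb{F}_2^k$ whose $i$-th coordinate is the parity of the number of clique-$i$ leaves in $v$'s subtree. Because clique-$i$ operators commute with clique-$j$ operators exactly when $i=j$, the symplectic form on the vectors represented by two subtrees simplifies to $\omega(c_1,c_2)=\sum_{i\neq j}p_i(c_1)\,p_j(c_2)\bmod 2$.

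The first and most important step is to prove $p(v)\in\{0,e_1,\ldots,e_k\}$ at every node, where $e_i$ is the $i$-th unit vector. Leaves trivially supply the $e_i$. At an internal node, the sibling commutation condition $\omega(c_1,c_2)=0$ together with the inductive hypothesis eliminates all combinations except $(0,0)$, $(0,e_a)$, $(e_a,0)$, and $(e_a,e_a)$; each of these sums back into the allowed set, so mixed parities $e_a+e_b$ with $a\neq b$ are never reachable. This step is exactly where the disjoint-cliques hypothesis is used, and I expect it to be the main obstacle, since the inventory of commuting sibling pairs has to be enumerated carefully and the conclusion fails in general (e.g.\ the Peres-Mermin 4-cycle, where $e_1+e_2$ is reachable and contextuality ensues).

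Given that parity restriction, fix any total order on the cliques and let $I(v)$ count in-tree leaf pairs $(L,L')$ with $c(L)>c(L')$. The recursion $I(v)=I(c_1)+I(c_2)+\sum_{i<j}n_j(c_1)\,n_i(c_2)$ reduces mod $2$ to $I(c_1)+I(c_2)+\sum_{i<j}p_j(c_1)\,p_i(c_2)$, and each admissible sibling pair from the first step makes the cross term vanish (for instance $(e_a,e_a)$ forces $i=j=a$, excluded by $i<j$); induction then gives $I(v)$ even at every node. For a tree with empty determining set every leaf has even multiplicity; rearranging the tree-order product of leaves into canonical (group-by-clique) order costs only the sign $(-1)^{I(\mathrm{root})}=+1$ from cross-clique swaps (same-clique leaves commute and contribute no sign), and in canonical order each clique's sub-product is $\prod_{L\in\mathcal{T}_i}L^{m_L}=\mathds{1}$ because every $m_L$ is even and Hermitian Paulis square to $\mathds{1}$. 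The root therefore equals $+\mathds{1}$, contradicting the supposition, so $\mathcal{T}$ is noncontextual.
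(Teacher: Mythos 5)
Your proof is correct, and it reaches the conclusion by a genuinely different route than the paper's. Both arguments share the same frame: reduce to Corollary 3.2 (no determining tree for $-\mathds{1}$ over $\mathcal{T}$ with empty determining set) and work with binary trees via Lemma 1.1. But where the paper performs tree surgery --- using Lemma 2.1 to float every parent of a same-clique pair up to the root, merging those parents clique by clique, and then arguing that the residual subtrees collapse to leaves so that the root becomes a product of per-clique factors $C_i=\mathds{1}$ --- you leave the tree untouched and track two invariants: the clique-parity vector $p(v)\in\mathbb{F}_2^k$, which sibling commutation (this is exactly where the disjoint-cliques hypothesis enters, via the bilinearity of the Pauli symplectic form) confines to $\{0,e_1,\dots,e_k\}$, and the cross-clique inversion count $I(v)$, whose parity the admissible sibling pairs force to be even. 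Together with even leaf multiplicities these give that the tree-order leaf product, which by property 3 of Definition 1 equals the root, is $+\mathds{1}$, contradicting a root of $-\mathds{1}$. What each approach buys: the paper's rewriting argument is more pictorial and reuses Lemma 2.1, which it needs again in the proof of Theorem 2; your invariant argument dispenses with Lemma 2.1 and with the normal-form claims about the residual subtrees (``$\tau_1$ and $\tau_2$ must be leaves themselves''), and it makes the sign bookkeeping --- the point that no stray factor of $-1$ can arise from reordering anticommuting cross-clique leaves --- completely explicit, at the modest cost of invoking the standard fact that commutation of products of Pauli operators is determined additively ($\bmod\ 2$) by the pairwise (anti)commutations of their factors.
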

\begin{proof}
    Suppose $G$ is a disjoint union of cliques.
    Consider any determining tree over $\mathcal{T}$ with empty determining set and root $R$.
    Since we may take the tree to be binary, and only leaves within the same clique in $G$ commute, if two leaves have the same parent then they must be in the same clique.
    For any particular commuting clique $\mathcal{C}$, the parent (product) of two operators in $\mathcal{C}$ commutes with every operator in $\mathcal{T}$, since operators not in $\mathcal{C}$ anticommute with each operator in the product.
    Therefore, we may use \cref{lemma1} to move all such parents up until they are children of $R$.
    The resulting tree is the first tree in \cref{lemma2figure}, where $P_1,P_2,...,P_k$ are parents of pairs of leaves in same clique.
    Since these all commute, we may group them by clique and merge the products within cliques, obtaining the second tree in \cref{lemma2figure}, where $C_1,C_2,...,C_m$ are parents of sets of leaves in the same clique, with one parent per clique (let there be $m$ cliques.)
    The remaining subtrees $\tau_1$ and $\tau_2$ are the remnants of the subtrees whose roots were the original children of $R$.
    Since all parents of pairs of leaves in the same clique have been removed from $\tau_1$ and $\tau_2$, no parent of leaves in $\tau_1$ or $\tau_2$ can be a parent of more than one leaf.
    But a parent of a single leaf is just that leaf, so we lose no generality by removing such leaves.
    Therefore, $\tau_1$ and $\tau_2$ cannot contain any parents of leaves, and therefore must be leaves themselves.
    But since they must commute, they must either be in the same clique, or one or both must be the identity (i.e., not actually present in the tree.)
    
    In either case, we can merge them with the product of leaves from their (shared) clique, resulting in the final tree in \cref{lemma2figure}, where $C_1,C_2,...,C_m$ are parents of sets of leaves in the same clique, with one parent per clique.
    If $\tau_1$ and $\tau_2$ are not present, then the tree already has this form.
    Since each $C_i$ is the product of all the leaves from the same clique, and each leaf must appear an even number of times (since the determining set is empty), every $C_i$ is $\mathds{1}$, and thus the root of the tree is $\mathds{1}$ as well.
	Therefore, since this holds for any determining tree over $\mathcal{T}$ with empty determining set, $\mathcal{T}$ is noncontextual, by Corollary 3.2.
\end{proof}
\begin{figure}[h!]
	\caption{Sequence of trees in \cref{lemma2}. In the first tree, the $P_i$ are parents of pairs of leaves, and $\tau_1,\tau_2$ are subtrees that are remnants of the original children of $R$. In the second tree, the $C_i$ are parents of leaves from within the same commuting clique, with one $C_i$ for each clique. In the third tree, the $\tau_i$ have been absorbed into the $C_i$.\label{lemma2figure}}
	\centering
	\includegraphics[width=1.8in]{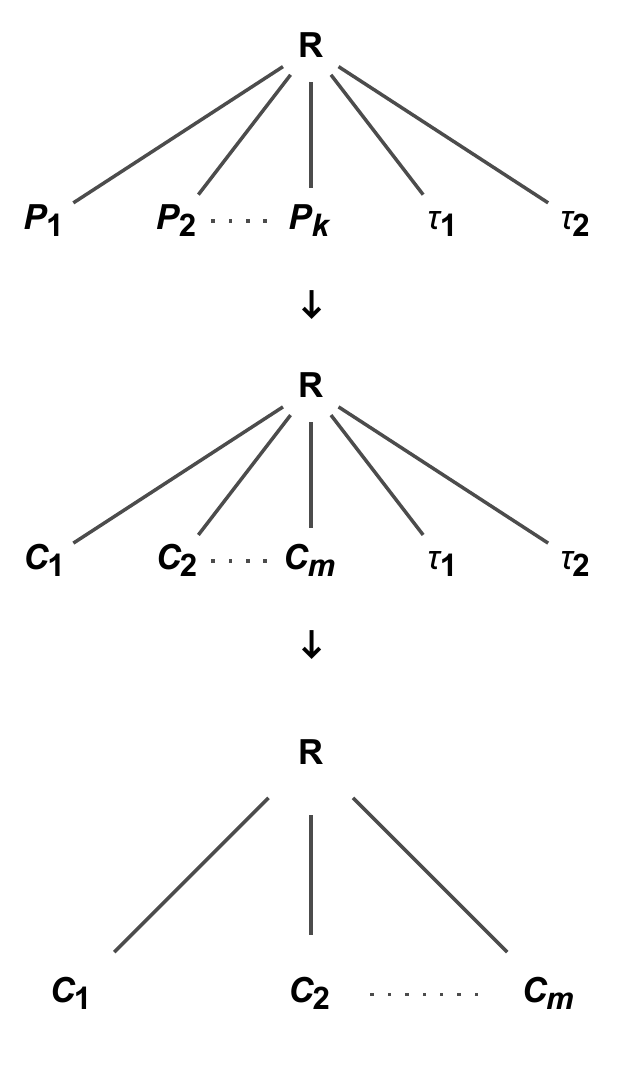}
\end{figure}
Fully-commuting or fully-anticommuting sets of Pauli operators are trivial examples of sets whose compatibility graphs are disjoint unions of cliques.
A less trivial example is the set $\{XX,ZZ,XZ,ZX\}$, in whose compatibility graph $\{XX,ZZ\}$ and $\{XZ,ZX\}$ are disjoint cliques (the reader may recognize these cliques as the intermediate layers in the two determining trees corresponding to the Peres-Mermin square, Fig. 1 in the main text.)
Thus \cref{lemma2} indicates that this set on its own is noncontextual.

\setcounter{theorem}{1}
\begin{theorem}
	\label{contextualcondition}
	A set of $n$ Pauli operators is contextual if and only if it contains a subset consisting of four operators whose compatibility graph has one of the forms given in \cref{4forms_app} (up to permutations of the operators).
\end{theorem}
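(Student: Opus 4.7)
The proof naturally splits into the two implications.

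For the ``if'' direction, suppose $\mathcal{S}$ contains a 4-element subset $\mathcal{S}'$ whose compatibility graph matches one of the forms in \cref{4forms_app}. The ``If'' direction of \cref{4formsthm}, already established by displaying explicit determining trees in \cref{4formsfigure1,4formsfigure2,4formsfigure3}, supplies a determining tree for $-\mathds{1}$ over $\mathcal{S}'$ with empty determining set. Since $\mathcal{S}' \subseteq \mathcal{S}$, the same tree is a valid determining tree over $\mathcal{S}$, so \cref{contextualitydef3} implies $\mathcal{S}$ is contextual.

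For the ``only if'' direction, suppose $\mathcal{S}$ is contextual. The plan has two phases. First, I strip away operators of $\mathcal{S}$ that commute with every other element of $\mathcal{S}$ to form $\mathcal{T}$, and argue that $\mathcal{T}$ is still contextual. Starting from a determining tree $\tau$ for $-\mathds{1}$ over $\mathcal{S}$ with empty determining set (provided by \cref{contextualitydef3}), for each $F \in \mathcal{S}\setminus\mathcal{T}$ \cref{lemma1} rewrites $\tau$ so that every occurrence of $F$ is a child of the root. Because the determining set of $\tau$ is empty, $F$ must appear an even number of times among these children; I then pair the copies, each pair contributing a factor of $\mathds{1}$ that can be pruned. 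Iterating over all universally-commuting operators produces a determining tree for $-\mathds{1}$ over $\mathcal{T}$ with empty determining set, so $\mathcal{T}$ is contextual.

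Second, I invoke the contrapositive of \cref{lemma2}: the compatibility graph of $\mathcal{T}$ cannot be a disjoint union of cliques. Since commutation is reflexive and symmetric on $\mathcal{T}$, this is equivalent to commutation failing transitivity, so there exist $A,B,C \in \mathcal{T}$ such that $A$ commutes with both $B$ and $C$ while $B$ and $C$ anticommute. Because $A \in \mathcal{T}$, there is some $D \in \mathcal{S}$ anticommuting with $A$, and necessarily $D \notin \{B,C\}$. The proposed 4-subset is $\{A,B,C,D\}$.

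The main obstacle is the final case analysis on how $D$ commutes with $B$ and $C$. There are four possibilities, three up to swapping $B \leftrightarrow C$: $D$ commutes with both, with exactly one, or with neither. For each case I would list the edges of the induced compatibility graph on $\{A,B,C,D\}$ and check by inspection that it is isomorphic to one of the three graphs in \cref{4forms_app} (respectively a 4-cycle, a path on four vertices, and a path on three vertices with $D$ isolated). Completing this finite check closes the ``only if'' direction, and, as noted in the deferred part of the proof of \cref{4formsthm}, simultaneously finishes the ``only if'' direction of that theorem as well.
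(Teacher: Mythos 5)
Your proof is correct and follows essentially the same route as the paper's: the ``if'' direction via the explicit trees behind the ``if'' half of Theorem 1, and the ``only if'' direction via \cref{contextualitydef3}, \cref{lemma1} (to strip universally commuting operators out of a $-\mathds{1}$ tree with empty determining set), the contrapositive of \cref{lemma2}, and the concluding $\{A,B,C,D\}$ case analysis. The only differences are cosmetic: in the pruning step it is only the \emph{leaf} occurrences of each universally commuting $F$ that are guaranteed even in number and need removal (internal nodes equal to $F$ are harmless), and you locate $D$ directly from the definition of $\mathcal{T}$ rather than, as the paper does, among the surviving leaves of the pruned tree---neither change affects validity.
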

\begin{proof}
   	``If" follows immediately from the ``If" implication in \cref{4formsthm}, since a set of measurements is contextual if any of its subsets are.
    
	``Only if": Let $\mathcal{S}$ be a contextual set of Pauli operators.
	Then there exists a determining tree $\tau$ for $-\mathds{1}$ over $\mathcal{S}$, with empty determining set, by Corollary 3.2.
	Let $\mathcal{T}\subseteq\mathcal{S}$ be the set of leaves of $\tau$.
	
	By \cref{lemma1}, we may move all instances of any leaf $C\in\mathcal{T}$ that commutes with every other operator in $\mathcal{T}$ up until they are children of the root.
	Since $C$ is self-inverse and must have even multiplicity (since the determining set of $\tau$ is empty), all of its instances now cancel, so we may remove them from the tree.
	Repeating this process for every leaf that commutes with every operator in $\mathcal{T}$ results in a new tree in which every leaf anticommutes with at least one other, so we assume that this is now the case.
	
    	Since $\mathcal{T}$ is contextual, by \cref{lemma2} the compatibility graph of $\mathcal{T}$ cannot be a disjoint union of cliques.
    	Therefore, commutation is not an equivalence relation over $\mathcal{T}$, so there must exist $A,B,C\in\mathcal{T}$ such that $A$ commutes with $B$ and $C$, but $B$ and $C$ anticommute.
	We argued above we may take $\tau$ to be such that every operator in $\mathcal{T}$ anticommutes with at least one other.
	Thus, must exist $D\in\mathcal{T}$ such that $A$ anticommutes with $D$.
	Therefore, the compatibility graph of $\{A,B,C,D\}\subseteq\mathcal{S}$ must have one of the contextual forms given in \cref{4forms_app} (up to swapping $B$ and $C$ in the second form).
\end{proof}

Note that stating that the compatibility graph for $\mathcal{T}$ is a disjoint union of cliques (the conditional in \cref{lemma2}) means that for any $A,B,C\in\mathcal{T}$, if $A$ commutes with both $B$ and $C$, then $B$ and $C$ must also commute.
In other words, it is equivalent to stating that commutation is an equivalence relation when restricted to $\mathcal{T}$.
We formalize this in the following theorem:

\begin{theorem}
	\label{commutationthm}
    	For a set $\mathcal{S}$ of Pauli operators, let $\mathcal{T}$ be the set obtained by removing any operator that commutes with all others in $\mathcal{S}$.
    	Then $\mathcal{S}$ is noncontextual if and only if commutation is an equivalence relation on $\mathcal{T}$.
\end{theorem}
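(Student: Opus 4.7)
The theorem amounts to a clean graph-theoretic recasting of the criteria already proved. Since commutation on $\mathcal{T}$ is tautologically reflexive and symmetric, it is an equivalence relation precisely when it is transitive, and transitivity on $\mathcal{T}$ is exactly the statement that the compatibility graph of $\mathcal{T}$ decomposes as a disjoint union of cliques. My plan is to deduce both directions by combining \cref{lemma2} (disjoint unions of cliques imply noncontextuality) with the four-subset criterion of \cref{contextualcondition}, together with the observation that a universally commuting operator cannot participate in an obstructing four-subset.

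For the ``if'' direction, I would assume commutation is an equivalence relation on $\mathcal{T}$, so its compatibility graph is a disjoint union of cliques; \cref{lemma2} then immediately yields that $\mathcal{T}$ is noncontextual. To promote this to $\mathcal{S}$, suppose for contradiction that $\mathcal{S}$ is contextual. By \cref{contextualcondition} there exists $\{A,B,C,D\} \subseteq \mathcal{S}$ whose compatibility graph realizes one of the forms in \cref{4forms_app}. Each such form has the property that every vertex has at least one non-neighbor among the four, i.e., each of $A,B,C,D$ anticommutes with some other element of $\{A,B,C,D\} \subseteq \mathcal{S}$. Hence none of them was removed in passing from $\mathcal{S}$ to $\mathcal{T}$, so $\{A,B,C,D\} \subseteq \mathcal{T}$; applying \cref{contextualcondition} now to $\mathcal{T}$ would make $\mathcal{T}$ contextual, contradicting what we obtained from \cref{lemma2}.

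For the ``only if'' direction I argue contrapositively: assume commutation is not transitive on $\mathcal{T}$. Pick $A,B,C \in \mathcal{T}$ with $A$ commuting with both $B$ and $C$ but with $B,C$ anticommuting. Since $A \in \mathcal{T}$, it anticommutes with some $D \in \mathcal{S}$, and this $D$ then also lies in $\mathcal{T}$ (it fails to commute with $A \in \mathcal{S}$). Regardless of the commutation relations of $D$ with $B$ and $C$, the compatibility graph of $\{A,B,C,D\}$ matches one of the three forms of \cref{4forms_app}, so \cref{contextualcondition} yields contextuality of $\mathcal{S}$.

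The statement is mostly a repackaging of tools already in hand, so I do not anticipate a serious obstacle. The one point that does deserve explicit verification is the graph-combinatorial check in the ``if'' direction: that every vertex of every diagram in \cref{4forms_app} has at least one non-neighbor among the four, so that all four vertices of an obstructing subset are automatically retained in $\mathcal{T}$. With this check the two directions are essentially symmetric applications of \cref{contextualcondition} paired with \cref{lemma2}.
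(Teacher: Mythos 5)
Your argument is correct, and the ``only if'' direction is exactly the paper's: take $A,B,C\in\mathcal{T}$ witnessing the failure of transitivity, use membership in $\mathcal{T}$ to produce a $D$ anticommuting with $A$ (note $D\notin\{A,B,C\}$ since $B,C$ commute with $A$), and observe that $\{A,B,C,D\}$ realizes one of the graphs of \cref{4forms_app}, so \cref{contextualcondition} (or \cref{4formsthm} plus monotonicity of contextuality under supersets) gives contextuality of $\mathcal{S}$. Where you genuinely diverge is the ``if'' direction. The paper goes back to the determining-tree machinery: it invokes \cref{lemma1} to push any universally commuting leaves up and cancel them, concluding that a contextual witness tree over $\mathcal{S}$ yields one over $\mathcal{T}$, so the (non)contextuality of $\mathcal{S}$ and $\mathcal{T}$ coincide, and then applies \cref{lemma2}. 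You instead stay entirely at the level of the already-proved graph characterization: \cref{lemma2} gives noncontextuality of $\mathcal{T}$, and if $\mathcal{S}$ were contextual, \cref{contextualcondition} would supply a four-element witness each of whose members anticommutes with another member of the witness (true of all three forms, since in each the $A$--$D$ and $B$--$C$ pairs are non-adjacent), hence survives the passage from $\mathcal{S}$ to $\mathcal{T}$, making $\mathcal{T}$ contextual --- a contradiction. The combinatorial check you flag is the load-bearing step and it does hold, precisely because every graph in \cref{4forms_app} has no vertex adjacent to all three others; indeed one could even skip \cref{lemma2} here, since the retained witness already contains an induced non-transitive triple, directly contradicting the assumed equivalence relation on $\mathcal{T}$. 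The trade-off: the paper's route establishes the stronger intermediate fact that $\mathcal{S}$ and $\mathcal{T}$ have identical contextuality status (useful elsewhere), while yours is shorter and avoids re-entering the tree-surgery arguments once \cref{contextualcondition} is in hand.
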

\begin{proof}
    	``If": if there exists any determining tree $\tau$ over $\mathcal{S}$ for $-\mathds{1}$ with empty determining set, then as noted in the "Only if" portion of the proof of \cref{contextualcondition}, by \cref{lemma1} we may obtain an equivalent determining tree over $\mathcal{T}$ (i.e., a determining tree for $-\mathds{1}$ whose determining set is also empty, and whose leaves are in $\mathcal{T}$.)
    	Thus, the (non)contextuality of $\mathcal{S}$ is identical to the (non)contextuality of $\mathcal{T}$.
   	 By \cref{lemma2}, if $\mathcal{T}$ is a disjoint union of cliques, then it is noncontextual, and thus by the above argument, $\mathcal{S}$ is as well.
    
    	``Only if": by the argument that concludes the proof of \cref{contextualcondition}, if commutation is not an equivalence relation on $\mathcal{T}$, then there is a subset of $\mathcal{T}$ whose compatibility graph has one of the forms \cref{4forms_app}.
    	By \cref{4formsthm}, any such subset is contextual, and thus $\mathcal{T}$ is as well.
\end{proof}

\section{Measurement contexts}
\label[appendix]{measurementcontextuality}

Let $\mathcal{H}$ be a finite dimensional Hilbert space.
A \emph{context} for $\mathcal{H}$ is a complete, commuting set of observables: a set of pairwise-commuting observables such that the shared eigenvectors are uniquely specified by their eigenvalue sets under the observables.
Since $\mathcal{H}$ is spanned by the eigenvectors of any observable, the shared eigenvectors of a context form a basis for $\mathcal{H}$.
If $\mathcal{H}$ is the Hilbert space of $n$ qubits, any context composed of two-outcome observables will contain $n$ such observables.

Two observables are said to be \emph{compatible} if they commute.
Compatibility of observables is not transitive, and hence is not an equivalence relation.
For example, given a system of two qubits, $XI$ and $ZI$ both commute with $IX$, but not with each other.
We may generalize the definition of compatibility to apply to contexts: two contexts are compatible if all of their observables pairwise commute.
\begin{theorem}
	Context compatibility is an equivalence relation.
\end{theorem}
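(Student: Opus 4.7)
The plan is to verify the three defining properties of an equivalence relation in turn, with all the work being in transitivity.

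Reflexivity is immediate from the definition: a context is by assumption a set of pairwise-commuting observables, so it is compatible with itself. Symmetry is equally immediate, since commutation of observables is symmetric, so pairwise commutation between all elements of two contexts does not depend on which context is listed first.

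For transitivity, suppose contexts $\mathcal{A}$, $\mathcal{B}$, $\mathcal{C}$ satisfy $\mathcal{A}$ compatible with $\mathcal{B}$ and $\mathcal{B}$ compatible with $\mathcal{C}$; we wish to show any $A\in\mathcal{A}$ and $C\in\mathcal{C}$ commute. The key structural fact to exploit is that $\mathcal{B}$ is a \emph{context}: its shared eigenvectors form a basis $\{|b_k\rangle\}$ of $\mathcal{H}$, and each basis vector is uniquely specified by its eigenvalue tuple under the observables in $\mathcal{B}$. I would then invoke the standard linear-algebra fact that any observable commuting with every member of $\mathcal{B}$ must be diagonal in this shared eigenbasis (since any two simultaneously-diagonalizable observables share an eigenbasis, and the uniqueness of the eigenvalue tuples forces the common eigenbasis to be $\{|b_k\rangle\}$ up to phases). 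Applying this to $A$ and to $C$, both are diagonal in the basis $\{|b_k\rangle\}$, and two operators diagonal in the same basis commute. Hence $[A,C]=0$, and transitivity follows.

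The main obstacle is making the diagonalization argument airtight: one must use that $\mathcal{B}$ is \emph{complete} (its joint eigenvalue tuples pick out basis vectors uniquely), not merely commuting. Without completeness the conclusion fails, as illustrated by the standard two-qubit example in the preceding paragraph of the excerpt, where $XI$ and $ZI$ both commute with $IX$ but not with each other. Once completeness is used, the transitivity step is a short calculation, so the proof is essentially three lines plus a remark on why the completeness hypothesis in the definition of a context is doing the work.
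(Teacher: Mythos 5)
Your proof is correct and follows essentially the same route as the paper: both arguments hinge on the completeness of a context forcing the joint eigenbasis to be unique (up to phases), so that every observable compatible with it is diagonal in that basis and diagonal operators commute. The only cosmetic difference is that you invoke completeness of the middle context alone, whereas the paper aligns the eigenbases of all three contexts before concluding pairwise commutation.
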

\begin{proof}
    Since commutativity of operators is reflexive ($[A,A]=0$) and symmetric ($[A,B]=0\Leftrightarrow[B,A]=0$), commutativity of contexts is as well.
	It remains to demonstrate that commutativity of contexts is transitive.
	
	Let $\sim$ denote context compatibility.
	Let $A=\{A_i\}$, $B=\{B_i\}$, and $C=\{C_i\}$ be contexts for a Hilbert space $\mathcal{H}$ with eigenbases $\{|a\rangle\}$, $\{|b\rangle\}$, and $\{|c\rangle\}$, so that for example,
	$A_i|a\rangle=a_i|a\rangle$
	where $a_i$ is the eigenvalue of $A_i$ labeling $|a\rangle$.
	Suppose $A\sim B$ and $A\sim C$.
	For any pair of commuting observables there exists a shared eigenbasis.
	Therefore since the eigenvectors in each basis $\{|a\rangle\}$, $\{|b\rangle\}$, $\{|c\rangle\}$ are uniquely specified by the eigenvalues of the observables in the associated contexts,
	$\{|a\rangle\}=\{|b\rangle\}$ and $\{|a\rangle\}=\{|c\rangle\}$
	(up to phases).
    Thus $\{|b\rangle\}=\{|c\rangle\}$ up to phases as well, so each is a set of shared eigenvectors of the observables in $B$ and $C$.
    Since each is a basis for $\mathcal{H}$, observables in $B$ commute with observables in $C$ on their shared eigenbasis, so all observables in $B$ and $C$ pairwise commute, and thus $B\sim C$.
\end{proof}

Let $C_\mathcal{H}$ be the set of contexts over the Hilbert space $\mathcal{H}$.
Then since compatibility is an equivalence relation on $C_\mathcal{H}$, it induces a partition of $C_\mathcal{H}$ into compatible equivalence classes.
We define a \emph{supercontext} to be the union of all contexts in a compatible class.
Thus, a supercontext is a maximal set of commuting observables.
A supercontext is itself a context, and contains every context compatible with it: therefore, the supercontext in any compatible class is unique.

We now prove that the outcomes of all measurements in a context are determined by the outcomes of the measurements in any subset that is itself a context.
\begin{theorem}
	\label{contextdetermining}
	Given a context $\mathcal{A}$, if $\mathcal{B}=\{B_j\}_{j=1}^J\subseteq\mathcal{A}$ is also a context, then the outcome of any measurement $A\in\mathcal{A}$ is uniquely specified by the outcomes of the measurements $\mathcal{B}$.
\end{theorem}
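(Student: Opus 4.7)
The plan is to exploit the definition of a context: a set of pairwise-commuting observables whose shared eigenvectors are uniquely labeled by tuples of eigenvalues. Since $\mathcal{B}$ is a context, there is an eigenbasis $\{|b\rangle\}$ of $\mathcal{H}$ in which every $B_j$ is diagonal, and each basis vector $|b\rangle$ is uniquely determined (up to phase) by the tuple $(b_1,\ldots,b_J)$ of its $B_j$-eigenvalues. I would start by recording this labeling explicitly and observing that a measurement of $\mathcal{B}$ producing outcomes $(b_1,\ldots,b_J)$ therefore identifies a unique basis vector $|b\rangle$.

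Next I would show that every $A\in\mathcal{A}$ is diagonal in this same basis. Because $\mathcal{A}$ is a context, its elements pairwise commute, so in particular $A$ commutes with each $B_j\in\mathcal{B}\subseteq\mathcal{A}$. Standard simultaneous-diagonalization of a commuting family then implies that $A$ is diagonal in any shared eigenbasis of the $B_j$, and in particular in $\{|b\rangle\}$. Thus there is a function $a(b_1,\ldots,b_J)$ giving the $A$-eigenvalue on the basis vector labeled by $(b_1,\ldots,b_J)$.

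Finally, I would invoke the measurement postulates: since $\mathcal{B}$ consists of commuting observables it can be measured jointly, and a jointly definite outcome $(b_1,\ldots,b_J)$ projects the state onto the unique $|b\rangle$ bearing that label. Since $A|b\rangle = a(b_1,\ldots,b_J)|b\rangle$, the subsequent (or simultaneous) measurement of $A$ returns $a(b_1,\ldots,b_J)$ with certainty. Hence the outcome of $A$ is a deterministic function of the outcomes of $\mathcal{B}$, which is the claim.

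The only delicate point, and the one I would be most careful about, is the uniqueness of the labeling: the definition of context gives uniqueness of the basis vector assigned to each realized eigenvalue tuple, but I should confirm that the same uniqueness passes from $\mathcal{B}$ to $\mathcal{A}$ via the inclusion $\mathcal{B}\subseteq\mathcal{A}$, so that the function $a(\cdot)$ is single-valued. This is immediate once one notes that any eigenbasis of $\mathcal{A}$ is in particular a shared eigenbasis of $\mathcal{B}$, and by $\mathcal{B}$ being a context this basis coincides (up to phases) with $\{|b\rangle\}$; no additional work beyond the definitions is required.
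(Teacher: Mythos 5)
Your proposal is correct and follows essentially the same route as the paper's proof: identify the unique shared eigenbasis $\{|b\rangle\}$ determined by the context $\mathcal{B}$, note that these vectors are also eigenstates of every $A\in\mathcal{A}$ (your final paragraph's observation that an eigenbasis of $\mathcal{A}$ is a shared eigenbasis of $\mathcal{B}$ and hence coincides with $\{|b\rangle\}$ up to phases is exactly the paper's argument, and is what justifies the otherwise loose ``diagonal in any shared eigenbasis'' step, since the context condition makes the joint eigenspaces one-dimensional), and then read off the deterministic map from $\mathcal{B}$-outcome tuples to the $A$-eigenvalue via the projection postulate. No gaps; this matches the paper's proof.
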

\begin{proof}
	If $A\in\mathcal{B}$, then the result is trivially true.
	Now suppose $A\notin\mathcal{B}$.
	Let $\{|b^{(i)}\rangle\}$ be the common eigenstates of the observables $\{B_j\}=\mathcal{B}$.
	Since $\mathcal{B}$ is a context, the $\{|b^{(i)}\rangle\}$ are uniquely specified.
	Therefore they are eigenstates of any observable in $\mathcal{A}$, since $\mathcal{A}$ is itself a context: in particular they are eigenstates of $A$.
	Let $\{b_1^{(i)},...,b_J^{(i)}\}$ be the eigenvalues of $|b^{(i)}\rangle$ under $\{B_1,B_2,...,B_J\}$, and let $a^{(i)}$ be the eigenvalue of $|b^{(i)}\rangle$ under $A$.
	Then since each $|b^{(i)}\rangle$ is uniquely specified by $\{b_1^{(i)},...,b_J^{(i)}\}$, if on any initial state we perform measurements of all of the $\{B_j\}$ we will project onto one of the common eigenstates $|b^{(i)}\rangle$, so if we subsequently perform the measurement $A$ we will obtain the outcome $a^{(i)}$ with certainty.
	Thus there is a unique map $f$ defined by
	\begin{equation}
		f(b_1^{(i)},...,b_J^{(i)})=a^{(i)}\quad\forall i,
	\end{equation}
	giving the outcome for $A$ determined by any joint outcome for $\mathcal{B}$.
\end{proof}

In the main text, we defined $\overline{\mathcal{S}}$ to be the set of measurements determined by $\mathcal{S}$.
It is worth giving a simple example of this: if $\mathcal{S}=\{XI,IX\}$, then $\overline{\mathcal{S}}=\{XI,IX,XX\}$.
We can see that if $\mathcal{S}$ is a context (as is the case in this example), $\overline{\mathcal{S}}$ will be the unique supercontext that contains $\mathcal{S}$.
No assignment of outcomes to $\mathcal{S}$ is contradictory in this case, but the assignment $(1,1,-1)$ to $\overline{\mathcal{S}}$ is contradictory since it violates the operator relations among $XI$, $IX$, and $XX$ (namely, that any one is the product of the other two.)
The four consistent assignments to $\overline{\mathcal{S}}$ are thus $(1,1,1)$, $(1,-1,-1)$, $(-1,1,-1)$, and $(-1,-1,1)$: note that each of these corresponds to a unique assignment to $\mathcal{S}$.
A joint outcome for any set of (commuting) measurements actually performed on a quantum system will always be consistent in this way.

\section{Quantifying contextuality}
\label[appendix]{measures}

Given the methods developed in the main text for assessing contextuality as a true or false property of a set of measurements, we may extend this to a measure of the amount of contextuality.
As noted in in the main text, for VQE one natural choice for a contextuality measure is the distance (using any operator norm) of the given Hamiltonian from any noncontextual Hermitian operator:
\begin{equation}
    \label{csep}
    \text{CSep}(H)\equiv\min_{H'}\left(\frac{\parallel H-H'\parallel}{\parallel H\parallel}\right),
\end{equation}
where $H'$ is any noncontextual Hermitian operator, and $\parallel\cdot\parallel$ is some operator norm.
We call this measure the \emph{contextual separation}, or CSep.

The Pauli operators are a Hilbert-Schmidt orthogonal basis for the Hermitian operators.
Thus, any noncontextual set of Pauli operators defines a subspace of the Hermitian operators.
Any noncontextual Hamiltonian $H'$ is an element of one of these subspaces, so the minimum in \eqref{csep} is achieved by setting $H'$ to be the maximal projection of $H$ onto any noncontextual subspace.
Since $H$ in the form
\begin{equation}
    H=\sum_ih_i\mathcal{P}_i
\end{equation}
is written as a vector with components $h_i$ in the coordinate system defined by the Pauli operators $\mathcal{P}_i$, the projection of $H$ onto the subspace spanned by a set $\mathcal{T}$ of Pauli operators is
\begin{equation}
    H'=\sum_{\mathcal{P}_i\in\mathcal{T}}h_i\mathcal{P}_i.
\end{equation}
Therefore, if we take the norm in \eqref{csep} to be the Hilbert-Schmidt norm, then \eqref{csep} may be written
\begin{equation}
    \text{CSep}(H)=\min_{\mathcal{T}\subseteq\mathcal{S}}\left(\frac{\parallel\sum_{\mathcal{P}_i\notin\mathcal{T}}h_i\mathcal{P}_i\parallel}{\parallel\sum_ih_i\mathcal{P}_i\parallel}\right),
\end{equation}
where $\mathcal{S}=\{\mathcal{P}_i\}$ (as above) and $\mathcal{T}$ is any noncontextual subset of $\mathcal{S}$.
More conveniently,
\begin{equation}
    \label{csep2}
    \text{CSep}(H)=\min_{\mathcal{S}'\subseteq\mathcal{S}}\left(\frac{\parallel\sum_{\mathcal{P}_i\in\mathcal{S}'}h_i\mathcal{P}_i\parallel}{\parallel\sum_ih_i\mathcal{P}_i\parallel}\right)=\min_{\mathcal{S}'\subseteq\mathcal{S}}\sqrt{\frac{\sum_{\mathcal{P}_i\in\mathcal{S}'}h_i^2}{\sum_ih_i^2}},
\end{equation}
where $\mathcal{S}'$ is any decontextualizing set, as defined in the main text.

Let us define $\vec{h}\equiv(h_1,...,h_m)$, and let $\vec{h}'$ be the projection of $\vec{h}$ onto the span of any subset of the standard basis such that the support of $\vec{h}-\vec{h}'$ corresponds to a noncontextual subset of $\mathcal{S}$.
(In other words, the set of measurements corresponding to the support of $\vec{h}'$ is a decontextualizing set.)
Then \eqref{csep2} assumes the useful form
\begin{equation}
    \label{csep3}
    \text{CSep}(H)=\min_{\vec{h}'}\sqrt{\frac{\vec{h}'^2}{\vec{h}^2}}=\min_{\vec{h}'}\left(\frac{\parallel\vec{h}'\parallel_2}{\parallel\vec{h}\parallel_2}\right).
\end{equation}
This suggests another name for the contextual separation: the \emph{contextual 2-distance}, since \eqref{csep3} is the (scaled) 2-distance between the vector $\vec{h}$ and its maximal noncontextual projection.
We may then generalize to the \emph{contextual $p$-distance}:
\begin{equation}
    \label{cdp_appendix}
    \text{CD}_p(H)=\min_{\vec{h}'}\left(\frac{\parallel\vec{h}'\parallel_p}{\parallel\vec{h}\parallel_p}\right),
\end{equation}
of which contextual separation is the special case for $p=2$.

We thus have a family of measures of contextuality for any Hermitian operator.
The contextual 1-distance $\text{CD}_1(H)$ is the minimum absolute fractional weight of any decontextualizing set.
Thus (as noted in the main text) it has a physical interpretation as an upper bound on the fractional error induced in the energy estimate by ``decontextualizing" the procedure.
As noted above, $\text{CD}_2(H)$ is the minimum Hilbert-Schmidt distance of the Hamiltonian from a noncontextual Hamiltonian.
For the contextual $p$-distances for $p>2$ we do not have such simple physical interpretations, although $\text{CD}_\infty(H)$ is the minimum over all decontextualizing sets $\mathcal{T}$ of the maximum $h_i$ associated to $\mathcal{T}$, as a fraction of the maximum $h_i$ over the entire Hamiltonian.

Prior measures of contextuality include the contextual fraction (CF) (in \cite{abramsky11a,abramsky17a,mansfield18a,duarte18a}), relative entropy of contextuality (REC), mutual information of contextuality (MIC), and contextual cost (CC) (all in \cite{grudka14a}), and rank of contextuality (RC) \cite{horodecki18a}.
In a strict sense our contextual $p$-distance is a complementary measure to CF and CC, both of which measure the fraction of an empirical model that must be strongly contextual.
In particular, Proposition 6.3 in \cite{abramsky11a} states that $\text{CF}=1$ if and only if the model is strongly contextual, which means that $\text{CD}_p>0$ if and only if $\text{CF}=1$ (and correspondingly, $\text{CC}=1$).
Thus $\text{CD}_p$ and $\text{CF}/\text{CC}$ vary over disjoint regions in the space of empirical models.
MIC and REC are shown to be equal in \cite{grudka14a}, and are related to contextuality as a resource for communication.

Rank of contextuality is the measure most closely related to our $\text{CD}_p$, being the minimum number of noncontextual empirical models (``boxes" in the terminology of \cite{horodecki18a}) required to simulate the system of interest.
However, rank of contextuality and $\text{CD}_p$ are not even necesarily monotonically related, since an adversary could construct a Hamiltonian for which many noncontextual boxes are required for an exact description, but for which the weights of all but one of these boxes are arbitrarily small, thus giving a low $\text{CD}_p$.
It is possible that one could define a weighted version of the rank of contextuality that would avoid this problem, or that rank of contextuality might have some other operational meaning in the variational quantum eigensolver, but we do not pursue this herein.

Calculating $CD_p$ via compatibility graphs involves an optimization problem over subgraphs that are disjoint unions of cliques, by \cref{commutationthm}.
Thus evaluating $CD_p$ by strictly graph-theoretic methods is a variant of the clique problem, and is therefore likely to be NP-complete, so any efficient method for evaluating the contextual $p$-distance will have to take advantage of the structure of commutation relations that goes beyond compatibility graphs.
Finding such a method is an open question.

\label[appendix]{relations}

In \cite{abramsky11a,ramanathan12a}, the authors do not require that the set of measurements forms a closed subtheory.
As a result, our criterion in \cref{contextualcondition} is different from Proposition 1 in \cite{ramanathan12a} (originally proven in a non-quantum setting in \cite{vorobyev63a,vorobyev67a}), which states that a set of measurements admits a joint probability distribution if its compatibility graph is chordal~\footnote{A chordal graph contains no induced cycles with length greater than 3.}.
Note that the first two graphs in \cref{4forms_app} are chordal, and would thus be classified as noncontextual by Proposition 1 in \cite{ramanathan12a}.

The criterion of \cite{ramanathan12a} is valuable in capturing strong contextuality as it may exist strictly internally in a set of measurements.
In \cite{karanjai18a} it is demonstrated that for a quantum procedure the efficiency of classical simulation is limited by the presence of contextuality, as noted above: efficient simulation by sampling from the discrete Wigner function is only possible in the absence of contextuality.
In showing this, as noted in the introduction the authors assume that their sets of measurements are closed subtheories \cite[pp. 1-2]{karanjai18a}, which means exactly that all elements of $\overline{\mathcal{S}}$ must be included.
Thus the condition for contextuality we have developed is that upon which their argument is based.

In addition to providing a connection to the simulability results in \cite{karanjai18a}, requiring that the set of measurements be a closed subtheory is important when interpreting a noncontextual joint probability distribution as an ontological hidden-variable theory. If a joint outcome assignment to operators including a commuting pair $A,B$ does not imply the corresponding assignment to $AB$, it is difficult to interpret the original assignment as an ontic state of the system. Indeed, such a state is manifestly contextual in the sense that the ontic values can only apply if certain measurements are disallowed. Impossibility of a local-realistic hidden-variable theory for a set of measurements is commonly regarded as equivalent to contextuality of that set (see the introductory discussions in \cite{ramanathan12a,howard14a,grudka14a,cabello15a,cabello18a,raussendorf18a}, for example), but apparently we must be careful in associating the two.

The converse of Proposition 1 of \cite{ramanathan12a} is proven in \cite{cabello18a}: this implies that since the third compatibility graph in \cref{4forms_app} (the 4-cycle) is nonchordal, unlike the other two it is contextual both by our condition and by that of \cite{cabello18a}.
However, \cite{cabello18a} uses the fact that a 4-cycle compatibility graph is equivalent to the CHSH scenario \cite{clauser69a,fine82a,araujo13a}, and so the contradiction for this case is derived from violation of an inequality rather than directly from outcome assignability.

\section{VQE experiments to date}
\label[appendix]{vqedetails}

Small scale VQE experiments have already been performed in numerous systems \cite{peruzzo2014variational,omalley16a,shen2017quantum,kandala2017hardware,PhysRevX.8.031022,dumitrescu18a,nam19a,kandala19a}.
We used the methods developed in the main text to evaluate the contextuality of these: the results are given in Table I, in the main text.
Some other details of these experiments are given in \cref{vqedetailstable}; the first block also repeats the information in Table I, for reference.
We choose to present $\text{CD}_0$ because, unlike the other $\text{CD}_p$, it is independent of the coefficients, and many of the experiments given in this table use ranges of values for the coefficients.

As noted above, calculating $\text{CD}_p$ for any $p$ (including $p=0$) is in general hard, since it involves an optimization over subsets of the terms in the Hamiltonian.
For the contextual experiments we consider here, however, we may find $\text{CD}_0$ by brute force search for those with fewer terms (namely, \cite{peruzzo2014variational,PhysRevX.8.031022,nam19a}), and those with larger numbers of terms (namely, \cite{kokail19a,kandala2017hardware,kandala19a}), the compatibility graphs are sufficiently structured to enable a greedy heuristic approximation of $\text{CD}_0$.
In particular, all of these sets of terms contain commuting subsets that comprise substantial fractions of the full sets.
Therefore, we approximate the largest noncontextual subset by first including the largest commuting subset, then including the largest possible second clique (i.e., a commuting subset that anticommutes with some subset of the first commuting set), then including the largest possible third clique, and so forth.
This greatly restricts the number of noncontextual subsets we have to consider, and renders the optimization tractable.
We expect this heuristic to give a good approximation to the largest noncontextual subset when the approximate largest noncontextual subsets thus found are of comparable size to the full set of terms: this is the case for the Hamiltonians in \cite{kokail19a,kandala2017hardware,kandala19a}.
We also find that for the Hamiltonians in \cite{peruzzo2014variational,PhysRevX.8.031022,nam19a}, for which we obtained the exact largest noncontextual subsets, our heuristic approach also finds the exact solutions.

Our heuristic approach may still be inefficient if it is hard to find the largest commuting subset of the set of terms, but for the Hamiltonians in \cite{kokail19a,kandala2017hardware,kandala19a} this task turns out to be simple.
In \cite{kokail19a} for large $N$ nearly all of the terms are diagonal ($\frac{1}{2}N^2-\frac{1}{2}N+3$ terms are diagonal, while $2(N-1)$ are not).
In the LiH Hamiltonian in \cite{kandala2017hardware,kandala19a} the diagonal terms form a maximal commuting set (for $n$ qubits there can be no more than $2^n-1$ non-identity commuting Pauli operators). 
Finally, in the BeH Hamiltonian in \cite{kandala2017hardware} the diagonal terms form a maximal commuting set minus one element (and no other commuting subset is larger).

\setcounter{table}{1}
\begin{table*}[htp]
  \begin{tabular}{ l l l c c r }
    Citation: & System\quad\ & $|\mathcal{S}|$ & Contextual? & $CD_0$ & Encoding \\
    \hline
    Dumitrescu \emph{et al.}, 2018 \cite{dumitrescu18a}\quad\ & Deuteron\quad\ & various\footnote{In \cite{dumitrescu18a}, Dumitrescu \emph{et al.} compute the ground-state energies of the deuteron for effective field theories with dimension $2^N$ for $N=1,2,3$ ($|\mathcal{S}|=1,4,3$), and extrapolate from these to the infinite-dimensional space.
    Thus, $\mathcal{S}$ is different for each value of $N$.
    For $N=1$, $\mathcal{S}=\{Z\}$.
    For $N=2$, $\mathcal{S}=\{ZI,IZ,XX,YY\}$.
    For $N=3$, $\mathcal{S}=\{IZ,XX,YY\}$.
    All of these are noncontextual.} & No & 0 & JW \\
    \hline
    Kandala \emph{et al.}, 2017 \cite{kandala2017hardware}\quad\ & H$_2$\quad\ & 4 & No & 0 & hybrid \\
    \hline
    O'Malley \emph{et al.}, 2016 \cite{omalley16a}\quad\ & H$_2$\quad\ & 5 & No & 0 & BK \\
    \hline
    Hempel \emph{et al.}, 2018 \cite{PhysRevX.8.031022}\quad\ & H$_2$\quad\ & 5 & No & 0 & BK \\
    \hline
    Hempel \emph{et al.}, 2018 \cite{PhysRevX.8.031022}\quad\ & H$_2$\quad\ & 14 & No & 0 & JW \\
    \hline
    Colless \emph{et al.}, 2018 \cite{colless18a}\quad\ & H$_2$\quad\ & 5 & No & 0 & BK \\
    \hline
    Kokail \emph{et al.}, 2018 \cite{kokail19a}\quad\ & Lattice Schwinger Model\quad\ & 231 & Yes & $\sim$0.16 & JW \\
    \hline
    Nam \emph{et al.}, 2019 \cite{nam19a}\quad\ &$\text{H}_2$O\quad\ & 22 & Yes & 0.27 & JW \\
    \hline
    Hempel \emph{et al.}, 2018 \cite{PhysRevX.8.031022}\quad\ & LiH\quad\ & 13 & Yes & 0.33 & BK \\
    \hline
    Peruzzo \emph{et al.}, 2014 \cite{peruzzo2014variational}\quad\ & HeH$^+$\quad\ & 8\quad\ & Yes & 0.38 & JW \\
    \hline
    Kandala \emph{et al.}, 2017 \cite{kandala2017hardware}\quad\ & BeH\quad\ & 164 & Yes & $\sim$0.74 & hybrid \\
    \hline
    Kandala \emph{et al.}, 2017/19 \cite{kandala2017hardware,kandala19a}\quad\ & LiH\quad\ & 99 & Yes & $\sim$0.77 & hybrid
  \end{tabular}
$\vspace{0.2in}$
  \begin{tabular}{ l l l l }
    Citation: & System\quad\ & $\underline{\mathcal{S}}$ & Error\\
    \hline
    Dumitrescu \emph{et al.}, 2018 \cite{dumitrescu18a}\quad\ & Deuteron\quad\ & various & $<3\%$ \\
    \hline
    Kandala \emph{et al.}, 2017 \cite{kandala2017hardware}\quad\ & H$_2$ & $\{ZI,IZ,XX\}$ & $<1.6$ mHa\footnote{1.6 mHa is chemical accuracy.} \\
    \hline
    O'Malley \emph{et al.}, 2016 \cite{omalley16a}\quad\ & H$_2$ & $\{ZI,IZ,XX\}$ & $<1.3$ mHa \\
    \hline
    Hempel \emph{et al.}, 2018 \cite{PhysRevX.8.031022}\quad\ & H$_2$ (BK) & $\{ZI,IZ,XX,YY\}$ & $F>0.99$ \\
    \hline
    Hempel \emph{et al.}, 2018 \cite{PhysRevX.8.031022}\quad\ & H$_2$ (JW) & $\{Z_i|i=1,...,4\}\cup\{XXYY,YYXX,YXXY,XYYX\}$\quad\ & $F>0.97$ \\
    \hline
    Colless \emph{et al.}, 2018 \cite{colless18a}\quad\ & H$_2$& $\{ZI,IZ,XX,YY\}$ & various \\
    \hline
    Kokail \emph{et al.}, 2018 \cite{kokail19a}\quad\ & L.S.M.\quad\ & $\{X_iX_{i+1},Y_iY_{i+1},Z_i|i=1,2,...,19\}\cup\{Z_{20}\}$\footnote{For brevity, we represent some operators in this table as Pauli operators with subscripts: the subscript indicates which qubit the operator acts on. So for example, $X_2X_3$ means $IXX$ (in a three-qubit system.)} & $F>0.8$ \\
    \hline
    Nam \emph{et al.}, 2019 \cite{nam19a}\quad\ & $\text{H}_2$O & $\{Z_i|i=1,...,5\}\cup\{X_1X_2,Y_1Y_2,X_3X_4,X_3X_5,X_4X_5,Y_3Y_4,Y_3Y_5,Y_4Y_5\}$\quad\ & $<1.6$mHa \\
    \hline
    Hempel \emph{et al.}, 2018 \cite{PhysRevX.8.031022}\quad\ & LiH & $\{X_i,Z_i|i=1,2,3\}$ & various \\
    \hline
    Peruzzo \emph{et al.}, 2014 \cite{peruzzo2014variational}\quad\ & HeH$^+$ & $\{X_i,Z_i|i=1,2\}$ & 3.05mHa \\
    \hline
    Kandala \emph{et al.}, 2017 \cite{kandala2017hardware}\quad\ & BeH & $\{X_i,Z_i|i=1,...,6\}$ & $<1.6$mHa\\
    \hline
    Kandala \emph{et al.}, 2017/19 \cite{kandala2017hardware,kandala19a}\quad\ & LiH & $\{X_i,Z_i|i=1,...,4\}$ & $<1.6$mHa
  \end{tabular}
\caption{Evaluation of contextuality in VQE experiments}\label{vqedetailstable}
\end{table*}

\end{document}